\title{Lower Bounds for Special Cases of Syntactic Multilinear ABPs}
\author{C. Ramya \and B.V.Raghavendra Rao, }
\institute{Department of Computer Science and Engineering,  IIT Madras, Chennai, India. \\  {\tt cramya2009@gmail.com, bvrr@iitm.ac.in }}
\newcommand{\var}{{\sf var}}
\newcommand{\kbl}{$k_B$-hitting}
\newcommand{\kbi}{$k_{B_i'}$-hitting}
\newcommand{\rank}{{\sf rank}}
\newcommand{\pcover}{$+$-covering}
\newcommand{\scover}{signature-cover}
\newcommand{\kub}{k\mbox{-}unbalanced}
\newcommand{\kw}{k\mbox{-}weak}
\newtheorem{defn}{Definition}
\newtheorem{prop}{Proposition}
\newtheorem{obs}{Observation}
\begin{document}

\maketitle

\begin{abstract}
Algebraic Branching Programs(ABPs) are standard models for computing polynomials. Syntactic multilinear ABPs (smABPs) are restrictions of ABPs where every variable is allowed to occur at most once in every path from the start to the terminal node.
Proving lower bounds against syntactic multilinear ABPs  remains a challenging open question in Algebraic Complexity Theory.  The current best known bound is only quadratic [Alon-Kumar-Volk, ECCC 2017].

In this article we develop a new approach upper bounding the rank of the partial derivative matrix of syntactic multlinear ABPs:  Convert the ABP to a  syntactic mulilinear formula with a super polynomial blow up in the size and then exploit the structural limitations of resulting formula to obtain a rank upper bound.

 Using this approach, we prove exponential lower bounds for special cases of smABPs and circuits - namely sum of Oblivious Read-Once ABPs, $r$-pass mulitlinear ABPs and sparse ROABPs. En route, we also prove super-polynomial lower bound for a special class of syntactic multilinear  arithmetic circuits.
  
\end{abstract}

\section{Introduction}

\paragraph*{}Algebraic Complexity Theory investigates the inherent complexity of computing polynomials with   arithmetic circuit as the computational model.  Arithmetic circuits introduced by Valiant~\cite{Val79} are standard models for computing polynomials over an underlying field.
An {\em arithmetic formula} is a subclass of arithmetic circuits corresponding to arithmetic expressions. For circuits and formulas, the parameters of interest are
{\em size} and {\em depth}, where size represents the number of nodes in the graph and depth the length of longest path in the graph. The arithmetic formulas are computationally weaker than circuits, a proper separation between them is not known.

Nested in-between the computational power of formulas and circuits is yet another well-studied model for computing polynomials referred to as {\em Algebraic Branching Programs} (ABPs for short). 
We know, 

\begin{center}
Arithmetic Formula $\subseteq_{\P}$ ABP $\subseteq_{\P}$ Arithmetic Circuits.
\end{center}

where the subscript $\P$ denotes the containment upto polynomial blow-up in size. Most of algebraic complexity theory revolves around understanding whether these containments are strict or not.


Separation of complexity classes of polynomials involves obtaining lower bound for specific polynomial against classes of arithmetic circuits. 
For general classes of arithmetic circuits, Baur and Strassen~\cite{BS83} proved that any arithmetic circuit compuitng an explicit $n$-variate degree $d$ polynomial must have size $\Omega(n\log d)$. In fact, this is the only super linear lower bound we know for general arithmetic circuits. 

While the challenge of proving lower bounds for general classes of circuits still seems to be afar, recent research has focused on circuits with additional structural restrictions such as multilinearity, bounded read etc. We now look at some of the models based on these restrictions in more detail.


 An arithmetic  circuit~(formula,ABP) is said to be {\em multilinear}  if every   gate (node) computes a multilinear polynomial. A seminal work of Raz~\cite{Raz09} showed that  multilinear formulas computing $\det_n$ or $perm_n$ must have size $n^{\Omega(\log n)}$. 
Although we know strong lower bounds for multilinear formulas,  the best known lower bound against syntactic multilinear circuits  is almost quadratic in the number of variables~\cite{AKV17}. 
 Note that any   multilinear ABP of $n^{O(1)}$ size computing $f$ on $n$ variables can be converted to a multilinear formula of size $n^{O(\log n)}$ computing $f$. In order to prove super-polynomial lower bounds for ABPs, it is enough to obtain  a multilinear formula computing $f$ of size $n^{o(\log n)}$ or  prove a lower bound of $n^{\omega(\log n)}$ for multilinear formulas, both of which are not known.

Special cases of multilinear ABPs have been studied time and again. In this work, we focus on the class of Read-Once Oblivious Algebraic branching programs~(ROABP for short). ROABPs are ABPs where every edge is labeled by a variable and every variable appears as edge labels in atmost one layer.  There are explicit polynomials with  $2^{\Omega(n)}$ ROABP size lower bound  \cite{Nis91,Jan,KNS16}. Also, ROABPs have been well studied in the   context of polynomial identity testing algorithms ~(See e.g.,\cite{For14}) 

In this article, we prove lower bounds against sum of multilinear ROABPs and other classes of restricted multilinear ABPs and circuits. Definitions of the models considered in this article can be found in Section~\ref{sec:prelim}.

\paragraph*{Our Results}
Let $X=\{x_1,\ldots,x_N\}$ and $\mathbb{F}$ be a field.  Let $g$ denote the family of $N$ variate (for $N$ even) defined by Raz and Yehudayoff~\cite{RY08}. (See Definition~\ref{def:raz-poly} for more details.)
As our main result, we show that any sum of sub-exponential ($2^{o(N^{\epsilon})}$)  size  ROABPs  to represent $g$ requires  $2^{N^{\epsilon}}$ many summands: 

\begin{theorem}
\label{thm:lb-roabp}
Let $f_1,\ldots f_m$ be polynomials computed by oblivious ROABPs such that $g= f_1+\cdots + f_m$. Then, $m = \frac{2^{\Omega(N^{1/5})}}{s^{c\log N}}$, where $c$ is a constant and $s=\max\{s_1,s_2,\ldots,s_m\}$, $s_i$ is the size of the ROABP computing $f_i$. 
\end{theorem}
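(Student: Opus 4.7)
The plan is to follow the framework laid out in the abstract: pass from each ROABP summand to a syntactic multilinear formula of manageable size, and then bound the rank of its partial derivative matrix under a random balanced partition of $X$. The starting observation is linearity of the partial derivative matrix in the underlying polynomial: for any partition $(Y,Z)$ of $X$ into equal halves, writing $M^{Y,Z}_f$ for the partial derivative matrix of $f$, we have $M^{Y,Z}_g = \sum_i M^{Y,Z}_{f_i}$ and hence $\rank(M^{Y,Z}_g) \leq \sum_i \rank(M^{Y,Z}_{f_i})$. By Raz--Yehudayoff, a uniformly random equi-partition $(Y,Z)$ satisfies $\rank(M^{Y,Z}_g) = 2^{N/2}$ with high probability, so it suffices to establish, with the same high probability, a per-summand bound of the shape $\rank(M^{Y,Z}_{f_i}) \leq s^{c\log N}\cdot 2^{N/2 - \Omega(N^{1/5})}$ for every $f_i$ computed by an oblivious ROABP of size $s$.

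To obtain such a per-summand bound, I first convert each oblivious ROABP into a syntactic multilinear formula. An oblivious ROABP of width $s$ and length $N$ with variable order $\pi$ computes $u^{\T} M_1(x_{\pi(1)}) \cdots M_N(x_{\pi(N)}) v$, a product of univariate matrices over disjoint variables listed in $\pi$-order. A divide-and-conquer unfolding of this matrix product yields a syntactic multilinear formula of size $s^{O(\log N)}$ whose every product gate multiplies two sub-formulas whose variable sets are \emph{contiguous intervals} of $\pi$. This interval structure---inherited from obliviousness---is what distinguishes the converted formula from an arbitrary multilinear formula and is the structural hook I plan to exploit.

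Next, I transfer a Raz--Yehudayoff style rank-decay argument to this interval-structured formula. At a product gate $g = g_1 \cdot g_2$ the rank factorizes as $\rank(M_{g_1}^{Y,Z}) \cdot \rank(M_{g_2}^{Y,Z})$, and the rank at any gate is trivially bounded by $2^{\min(|\var(g)\cap Y|,\,|\var(g)\cap Z|)}$. Because every product gate partitions its variables into two contiguous $\pi$-intervals, a random $(Y,Z)$-split is simultaneously well-unbalanced at a noticeable fraction of gates along every root-to-leaf path; arranging the parameters of the recursion so that $\Theta(\log N)$ such gates each contribute a multiplicative rank-deficit aggregates to the required $2^{-\Omega(N^{1/5})}$ savings, while the overall $s^{c\log N}$ factor absorbs the formula-size blow-up.

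The hardest step will be the last one: extracting a clean $2^{-\Omega(N^{1/5})}$ rank deficit from the interval-structured formula that survives the $s^{O(\log N)}$ blow-up of the conversion. I must show that a uniformly random equi-partition of $X$ induces, with high probability, a $(Y,Z)$-split that is sufficiently unbalanced at enough product gates of the converted formula. This is where the combinatorics of contiguous $\pi$-intervals and the Raz--Yehudayoff partition distribution need to mesh correctly to yield the $N^{1/5}$ exponent, rather than a weaker quantity that would be swallowed by the $s^{O(\log N)}$ factor in the final bound.
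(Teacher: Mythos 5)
Your proposal correctly identifies the high-level strategy (ABP $\to$ formula via divide-and-conquer, exploit the contiguous-interval structure, apply the Raz/Raz--Yehudayoff rank machinery), but the actual rank-decay argument you sketch is not the one the paper uses, and as stated it has a genuine gap. You propose to show that a random partition is ``simultaneously well-unbalanced at a noticeable fraction of gates along every root-to-leaf path'' and then aggregate $\Theta(\log N)$ multiplicative deficits per path, in the style of Raz's central-path argument for general multilinear formulas. The problem is the union bound: the converted formula has $s^{O(\log N)}$ root-to-leaf paths (and, unlike the full-signature situation, the number of central signatures from an ROABP-derived formula can be $N^{\Omega(\log N)}$, as the paper notes in Section~\ref{sec:signature}), so the per-path failure probability you would need is too small to come out of Proposition~\ref{prop:hyper} with the parameters you get from interval sizes. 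Moreover, you do not explain how the deficit survives passage through $+$ gates: at a sum gate the deficit is only as good as the worst child, so ``many paths are good'' does not directly bound the rank at the sum gate.

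The paper's proof sidesteps both issues by using a \emph{single} unbalanced block rather than aggregating along paths. It fixes the $\Theta(\sqrt{N})$ disjoint intervals at product-height $(\log N)/2$ as blocks $B_1,\ldots,B_t$, each of size $\Theta(\sqrt{N})$, and shows (Lemma~\ref{lem:kbl}) that with probability $1 - 2^{-\Omega(\sqrt{N}\log N)}$ some $B_i$ is $N^{1/5}$-unbalanced --- a union bound over only $\Theta(\sqrt{N})$ events. The crucial structural fact you do not invoke is that in the ROABP-derived formula \emph{all children of a $+$ gate have the same associated interval} (property 2 after Lemma~\ref{lem:abptoformula}); this is what lets a single unbalanced block propagate its rank deficit through every sum gate up to the root (Observation~\ref{obs:hitting} and Lemma~\ref{lem:rank}), giving $\rank(M_{f^\varphi}) \leq |\Phi|\cdot 2^{N/2 - k/2}$ with $k = N^{1/5}$, without any accounting over paths. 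To repair your approach you would either need to adopt this single-block argument, or restrict attention to the $O(N)$ \emph{full} central signatures (intervals $S_v$ rather than variable sets $X_v$, as in the paper's appendix) --- but the latter route yields only a polynomial bound, well short of $2^{\Omega(N^{1/5})}$.
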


 Further, we show that Theorem~\ref{thm:lb-roabp} extends to the case of $r$-pass multilinear ABPs (Theorem~\ref{thm:lb-rpass}) for $r=o(\log n)$ and $\alpha$-sparse  multilinear ABPs (Theorem~\ref{thm:sparse-factor}) for $1/1000 \leq \alpha \leq 1/2$.

Finally, we develop a refined approach to analyze syntactic  multilinear formulas based on the central paths introduced by Raz~\cite{Raz09}.  Using this, we prove exponential lower bound against a class of $O(\log N)$  depth syntactic multilinear circuits (exact definition can be found in Section~\ref{sec:signature}, Definition~\ref{def:variable-close}). 

\begin{theorem}
\label{thm:lb-delta-close}
Let $\delta < N^{1/5}/10$ and $c= N^{o(1)}$. Any $O(\log N)$ depth $(c,\delta)$ variable close syntactically multilinear  circuit computing the polynomial $g $ requires  size $ 2^{\Omega(N^{1/5}/\log N)}$.
\end{theorem}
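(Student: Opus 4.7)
Following the main methodology of the paper, I would convert the $O(\log N)$ depth syntactic multilinear circuit into a syntactic multilinear formula and then exploit the central path structure, in the refined form alluded to in the introduction, to upper bound the rank of the partial derivative matrix of the polynomial it computes.

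First I would convert the circuit, of size $s$ and depth $O(\log N)$, into a syntactic multilinear formula of size at most $s^{O(\log N)}$ by unfolding each gate into a tree with duplication of shared sub-computations (done carefully so syntactic multilinearity is preserved). Although this blow-up is super-polynomial, a final lower bound of $2^{\Omega(N^{1/5})}$ on the formula size translates into the claimed $2^{\Omega(N^{1/5}/\log N)}$ bound on the original circuit size. Crucially, the $(c,\delta)$ variable-close property of the starting circuit is inherited by every sub-formula obtained during the unfolding, so the resulting formula carries the same structural restriction.

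Next I would apply the refined central-path analysis. Following Raz, each product gate in the formula has a central path defined by repeatedly descending to the child that sees ``more'' of the variable set under a random partition. The $(c,\delta)$ variable-closeness should guarantee that along any central path the accumulated imbalance between the left and right sub-formulas is controlled: deviations from a balanced split occur at most $c = N^{o(1)}$ times, each contributing imbalance at most $\delta < N^{1/5}/10$. I would then decompose the formula as a sum over central-path signatures and, for each signature, upper bound the rank of the partial derivative matrix of the associated sub-polynomial by $2^{N/2 - \Omega(N^{1/5})}$, the gap coming from the many product gates on the central path at which a sufficiently balanced partition is forced and each of which contributes independent rank deficiency. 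Comparing against the Raz--Yehudayoff lower bound, which guarantees that the partial derivative matrix of $g$ has rank close to $2^{N/2}$ under a random partition, forces the number of sub-formulas, and hence the original formula size, to be at least $2^{\Omega(N^{1/5})}$.

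The main obstacle is the refined central path analysis itself. One has to show that $(c,\delta)$ variable-closeness is simultaneously (i) \emph{strong} enough that at $\Omega(N^{1/5})$ central-path gates the partition is provably balanced, yielding the desired rank drop, and (ii) \emph{weak} enough that the number of distinct central-path signatures arising from sub-formulas of the converted formula can be enumerated without wiping out the gap. Fine-tuning the interaction between the parameters $c$ and $\delta$, so that both the rank upper bound and the signature count land on the correct side of the Raz--Yehudayoff bound, is the delicate quantitative heart of the proof, and is where the restriction $\delta < N^{1/5}/10$ and the bound $c = N^{o(1)}$ enter in an essential way.
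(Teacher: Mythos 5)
There is a genuine gap, and also a misreading of the key definition.

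First, the misreading: the $(c,\delta)$ variable-close condition (Definition~\ref{def:variable-close}) is a constraint on $+$ gates, not on the balance of $\times$ gates. It says that at each $+$ gate $v = v_{11}\times v_{12} + \cdots + v_{r1}\times v_{r2}$ one can pick one child per summand so that the resulting variable sets $X_{v_{1b_1}},\ldots,X_{v_{rb_r}}$ fall into at most $c$ clusters of Hamming radius $\delta$. Your description — ``deviations from a balanced split occur at most $c$ times, each contributing imbalance at most $\delta$'' — reads $(c,\delta)$ as a quantitative balance constraint on central paths, which is not what the definition asserts. This matters because the role of the hypothesis in the actual argument is to bound the \emph{number} of essentially distinct signatures, not to force balance directly.

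Second, and more importantly, the ``delicate quantitative heart'' you explicitly punt on is precisely the content of the proof. The paper does not try to enumerate or union-bound over all central-path signatures (which can number $N^{\Omega(\log N)}$ even for ROABPs). Instead it (i) introduces a \emph{signature-cover}: a set ${\cal C}$ of central signatures such that every central path outside ${\cal C}$ meets some path of ${\cal C}$ at a $\times$ gate, and proves (Lemma~\ref{lem:covering-sign}) that if every signature in ${\cal C}$ is $k$-unbalanced then $\rank(M_{f^\varphi}) \le |\Phi|\cdot 2^{N/2-k/2}$; (ii) shows that $(c,\delta)$ variable-closeness yields a signature-cover admitting a \emph{$\delta$-cluster} of size only $c^{O(\log N)} = N^{o(\log N)}$ (Lemma~\ref{lem:delta-signature-bound}); (iii) observes that if every cluster representative is $k$-unbalanced under $\varphi$, then every signature in the cover is $(k-2\delta)$-unbalanced (Observation~\ref{lem:delta-unbalanced}); and (iv) applies a union bound only over the $N^{o(\log N)}$ representatives, each of which is $k$-unbalanced with probability $1 - N^{-\Omega(\log N)}$, giving a nonzero-probability good $\varphi$. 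With $k\approx N^{1/5}$ and $\delta < N^{1/5}/10$ the loss $2\delta$ still leaves a gap of $\Omega(N^{1/5})$ in the exponent, and $s < 2^{k/(10\log N)}$ gives a contradiction with Lemma~\ref{lem:ry}. Without the signature-cover and $\delta$-cluster ideas, the union bound you would need is over far too many signatures and the argument collapses; identifying the obstacle but not supplying these mechanisms leaves the proof incomplete. (Minor: the Raz--Yehudayoff polynomial $g$ is full rank under \emph{every} partition, not merely with high probability under a random one.)
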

\paragraph*{Our approach}
Our proofs are a careful adaptation of the rank argument developed by Raz~\cite{Raz09}. This involves upper bounding the dimension of the partial derivative matrix (Definition~\ref{def:partition}) of the given model under a random partition of variables. However, upper bounding the rank of the partial derivative matrix of a syntactic multilinear ABP is a difficult task and there are no known methods for the same. To the best of our knowledge, there is no non-trivial upper bound on the rank of the partial derivative matrix of polynomials computed by ABPs (or special classes of ABPs) under a random partition. 

Our crucial observation is, even though conversion of a syntactic  multilinear ABP  of size $s$ into a syntactic multilinear formula  blows the size to  $s^{O(\log s)}$, the resulting formula is much simpler in structure than an arbitrary syntactic  multilinear formula of size $n^{O(\log s)}$.  For  each of the special classes of multilinear ABPs (ROABPS, $r$-pass ABPs etc) ) considered in the article, we identify and exploit the  structural limitations of the formula obtained from the corresponding  ABP  to prove upper bound on the rank of the partial derivative matrix under a random partition. Overall our approach to upper bound the rank can be summarized as follows:
\begin{enumerate}
\item Convert the given multilinear ABP $P$  of size $s$ to a multilinear formula $\Phi$ of size $s^{O(\log s)}$ (Lemmas~\ref{lem:abptoformula},~\ref{lem:rpasstoformula} and ~\ref{lem:sparsetoformula});
\item Identify structural limitations of the resulting formula $\Phi$ and exploit it to prove upper bound on the rank of the partial derivative matrix under a random partition (Lemmas~\ref{lem:kbl},~\ref{lem:rankub}, ~\ref{lem:sparse-ub} and ~\ref{lem:covering-sign});
\item Exhibit a hard polynomial that has full rank under all partitions. (Lemma~\ref{lem:ry}.)
\end{enumerate}

\subsubsection*{Related Results} 
 Anderson et. al~\cite{AFSSV16} obtained exponential lower bound against oblivious read $k$ branching programs.   Kayal et. al~\cite{KNS16}  obtained a polynomial that can be written as sum of three ROABPs each of polynomial size such that any ROABP computing it has exponential size. Arvind and Raja~\cite{AR16} show that if permanent can be written as a sum of $N^{1-\epsilon}$ many ROABPs, then at least one of the ROABP must be of exponential size.   Further,  sum of read-once polynomials, a special class of oblivious ROABPs was considered by   Mahajan and Tawari~\cite{MT15}, independently by the authors~\cite{CR16}. 
Recently, Chillara et. al~\cite{CLS18} show that any $o(\log N)$ depth syntactic multilinear circuit cannot  a  polynomial that is computable by  width-2 ROABPs.

The existing lower bounds against ROABPs  or sm-ABPs,
 implicitly restrict the  number  of  different orders in which the variables can  be read  along any $s$ to $t$ path.  In fact, the lower bound given in Arvind and Raja~\cite{AR16} allows only  $N^{1-\epsilon}$ different ordering of the variables.  To the best of our knowledge, this is the state of art with respect to the number of  variable orders allowed in ABPs. Without any restriction on the orderings, the best known lower bound is only   quadratic upto poly logarithmic factors~\cite{AKV17}. In this light, our results in Theorems~\ref{thm:lb-roabp} and~\ref{thm:lb-rpass} can be seen as the first of the kind where the number of different orders allowed is sub-exponential.


Proofs omitted due to space constraints can be found in the Appendix.

\section{Preliminaries}
\label{sec:prelim}

In this section we include necessary definitions and notations used. We begin with the formal definition of the models considered in this article.

An {\em arithmetic circuit}~$\mathcal{C}$ over a field  $\mathbb{F}$ and variables  $X={x_1,\ldots, x_N}$ is a directed acyclic graph with vertices of in-degree 0 or 2 and exactly one vertex of out-degree 0 called the output gate. The vertices of in-degree 0 are called input gates and are labeled  by elements from $X \cup \mathbb{F}$. The vertices of in-degree 2 are labeled by either $+$ or $\times$. Every gate in $\cal{C}$ naturally computes a polynomial. The polynomial $f$ computed by $\mathcal{C}$ is the polynomial computed by the output gate of the circuit. The {\em size} of an arithmetic circuit is the number of gates in $\mathcal{C}$ and {\em depth} of $\mathcal{C}$ is the length of the longest path from an input gate to the output gate in $\mathcal{C}$. 
An {\em arithmetic formula} is an arithmetic circuit where the underlying undirected graph is a tree.

 An {\em Algebraic Branching Program} $P$ (ABP for short) is a layered  directed acyclic graph with two special nodes, a start node $s$ and a terminal node $t$.   
  Each edge in $P$ is labeled by either an $x_i\in X$ or $\alpha\in\mathbb{F}$.  The size of $p$ is the total number of nodes, width is the maximum number of nodes in any layer of $P$.  Each path $\gamma$ from $s$ to $t$ in $P$ computes the product of the labels of the edges in $\gamma$ which is a polynomial. The ABP $P$ computes the sum over all $s$ to $t$ paths of such polynomials. 

An ABP $P$ is said to be {\em syntactic multilinear} (sm-ABP for short)  if every  variable occurs at most once in every   path in  $P$. 
An ABP is said to be  {\em oblivious} if for every layer $L$ in $P$ there is at most one variable that labels edges from $L$.

\begin{defn}{\em (Read-Once Oblivious ABP.)}
An ABP $P$ is said to be Read-Once Oblivious (ROABP for short) if $P$ is an oblivious and each $x_i\in X$ appears as edge label in at most one layer. 
\end{defn}

In any Oblivious ROABP, every variable appears in exactly one layer and all variables in a particular layer are the same. Hence, variables appear in layers from the start node to the terminal node in the
{\em variable order} $x_{i_1},x_{i_2},\ldots,x_{i_n}$ where $(i_1,i_2,\ldots,i_n)\in S_n$ is a permutation on $[n]$. 
 A natural generalization of ROABPs is the $r$-pass ABPs defined in~\cite{AFSSV16}:

\begin{defn}{\em ($r$-pass multilinear ABP).}
An oblivious  sm-ABP $P$ is said to be $r$-pass  if there   are permutations $\pi_1,\pi_2,\ldots,\pi_r\in S_n$ such that $P$ reads the variables from $s$ to $t$ in the order
$(x_{\pi_1(1)},x_{\pi_1(2)},\ldots,x_{\pi_1(n)}),\ldots$,$(x_{\pi_r(1)},x_{\pi_r(2)},\ldots,x_{\pi_r(n)}).$
\end{defn}

Recall that a  polynomial $f\in\mathbb{F}[X]$ is $s$-sparse if it has at most $s$ monomials with non-zero coefficients.
\begin{defn}{\em ($\alpha$-Sparse   ROABP).}~\cite{For14}
An $d+1$ layer ABP $P$ is said to be an $\alpha$-sparse ROABP if there is a partition of $X$ into $d = \Theta(N^{\alpha})$ sets $X_1,X_2,\ldots,X_d$  with $|X_i| = N/d$ such that every edge label in layer $L_i$ is an $s$-sparse multilinear polynomial in $\mathbb{F}[X_i]$ for  $s=N^{O(1)}$.
\end{defn}


Let $\Psi$ be a circuit over $\mathbb{F}$ with $X=\{x_1,\ldots, x_N\}$ as inputs.  For a gate $v$ in $\Psi$, let $X_{v}$ denote the set of variables that appear  in the sub-circuit rooted at $v$. The circuit $\Psi$ is said to be {\em syntactic multilinear} (sm for short), if for every $\times$ gate $v= v_1\times v_2$ in $\Psi$, we have $X_{v_1} \cap X_{v_2} =\emptyset $. By definition, every syntactic multilinear circuit is a multilinear circuit. In~\cite{Raz09}, it was shown that every  multilinear formula can be transformed into a syntactic multilinear formula of the same size, computing the same polynomial.

Let $\Psi$ be a circuit (formula) and $v$ be a gate in $\Psi$. The {\em product-height} of $v$ is the maximum number of $\times$  gates   along any $v$ to root path in $\Psi$.

We now review the partial derivative matrix of a polynomial introduced in~\cite{Raz09}. Let  $Y = \{ y_1,\ldots,y_m\}$ and $Z=\{z_1,\ldots,z_m\}$ be  disjoint sets of variables. 

\begin{defn}{\em (Partial Derivative Matrix.)}
\label{def:partition}
Let $f\in\mathbb{F}[Y,Z]$ be a polynomial. The {\em partial derivative matrix} of $f$(denoted by $M_f$) is a $2^m\times 2^m$ matrix defined as follows. For monic multilinear monomials $p$ and $q$ in variables $Y$ and $Z$ respectively, the entry $M_f [p,q]$ is the coefficient of the monomial $pq$ in $f$.  
\end{defn}

For a polynomial $f$, let  $\rank(M_f)$ denote the rank of the matrix $M_f$ over the field $\mathbb{F}$. It is known that  $\rank(M_f)$ satisfies sub-additivity and sub-multiplicativity:
 
\begin{lemma}{{\em \cite{Raz09}(Sub-additivity, sub-multiplicativity)  .}}
\label{lem:sub-aditivity}
Let $f,g \in \mathbb{F}[Y,Z]$. Then, we have that
$\rank(M_{f+g}) \leq \rank(M_f)+\rank(M_g).$ Further, if $\var(f) \cap \var(g) = \emptyset$, then $\rank(M_{fg}) = \rank(M_f)\rank(M_g)$.
 \end{lemma}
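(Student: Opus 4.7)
The plan is to show both assertions directly from the combinatorial definition of $M_f$, reducing each to a standard fact of linear algebra. The sub-additivity half is essentially immediate: the map $f \mapsto M_f$ is $\mathbb{F}$-linear, because for any fixed pair of monic multilinear monomials $p$ in $Y$ and $q$ in $Z$, the entry $M_f[p,q]$ is just the coefficient of $pq$ in $f$, which is a linear functional on $\mathbb{F}[Y,Z]$. Hence $M_{f+g} = M_f + M_g$, and the usual sub-additivity of matrix rank under addition yields $\rank(M_{f+g}) \leq \rank(M_f) + \rank(M_g)$.

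For the multiplicativity statement, assume $\var(f) \cap \var(g) = \emptyset$. Split $Y$ and $Z$ according to which polynomial uses the variable: let $Y_f = \var(f) \cap Y$, $Y_g = \var(g) \cap Y$, and similarly $Z_f, Z_g$; by hypothesis these are disjoint unions $Y = Y_f \sqcup Y_g$ and $Z = Z_f \sqcup Z_g$. Any monic multilinear monomial $p$ in $Y$ factors uniquely as $p = p_f \cdot p_g$ with $p_f$ a monomial in $Y_f$ and $p_g$ a monomial in $Y_g$; likewise $q = q_f \cdot q_g$. Since $f \in \mathbb{F}[Y_f, Z_f]$ and $g \in \mathbb{F}[Y_g, Z_g]$ use disjoint variable sets, the coefficient of $pq$ in $fg$ splits as
\[
M_{fg}[p,q] \;=\; M_f[p_f, q_f] \cdot M_g[p_g, q_g].
\]

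The next step is to recognize this identity as the definition of a Kronecker product. After permuting rows and columns so that each row index is written as the pair $(p_f, p_g)$ and each column index as the pair $(q_f, q_g)$, the displayed equation says precisely that $M_{fg} = M_f \otimes M_g$ (up to row/column permutations, which preserve rank). Invoking the standard identity $\rank(A \otimes B) = \rank(A)\,\rank(B)$ for Kronecker products then gives $\rank(M_{fg}) = \rank(M_f) \cdot \rank(M_g)$, completing the proof.

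The only subtlety I anticipate is bookkeeping: one must verify that the indexing bijection $p \leftrightarrow (p_f, p_g)$ (and the analogous one for $q$) is well-defined and total on the full $2^m \times 2^m$ row/column sets, including the case where a variable of $Y$ or $Z$ appears in neither $f$ nor $g$. In that case, any monomial $p$ that contains such a variable has $M_{fg}[p,q] = 0$, and the corresponding rows/columns contribute zero blocks to both sides of the Kronecker identity, so the rank computation is unaffected. Beyond this mild case analysis, the argument is purely linear-algebraic.
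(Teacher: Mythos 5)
The paper does not prove this lemma---it is imported verbatim from Raz's work with a citation---so there is no internal proof to compare against. Judged on its own, your argument follows the standard and correct route: linearity of coefficient extraction for sub-additivity, and a Kronecker-product decomposition for sub-multiplicativity.

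One imprecision worth fixing: you assert ``by hypothesis these are disjoint unions $Y = Y_f \sqcup Y_g$,'' but the hypothesis only gives $Y_f \cap Y_g = \emptyset$, not that they cover $Y$. This matters because, as stated, the identity ``$M_{fg} = M_f \otimes M_g$ up to row/column permutations'' cannot literally hold: $M_f$, $M_g$, and $M_{fg}$ are all $2^m \times 2^m$, so $M_f \otimes M_g$ is a $4^m \times 4^m$ matrix of the wrong shape. The correct statement is that the submatrix of $M_{fg}$ on rows indexed by monomials supported in $Y_f \cup Y_g$ and columns supported in $Z_f \cup Z_g$ equals (after permutation) $M_f' \otimes M_g'$, where $M_f'$ is the submatrix of $M_f$ restricted to monomials in $Y_f$ and $Z_f$, and similarly for $M_g'$. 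Since all deleted rows and columns---in $M_f$, in $M_g$, and in $M_{fg}$---are identically zero, one gets $\rank(M_{fg}) = \rank(M_f' \otimes M_g') = \rank(M_f')\rank(M_g') = \rank(M_f)\rank(M_g)$. You gesture at this in your closing ``subtlety'' paragraph, but the main argument should be phrased in terms of these restricted matrices from the outset rather than claiming a false equality and then explaining it away.
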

Further,  since row-rank of a matrix is equal to its column rank, we have:

\begin{lemma}{\em \cite{Raz09}} 
\label{lem:rankub}
 For  $f\in\mathbb{F}[Y_1,Z_1]$,  $\rank(M_f)\leq 2^{\min\{|Y_1|,|Z_1|\}}$, where $Y_1\subseteq Y, Z_1\subseteq Z$. 
\end{lemma}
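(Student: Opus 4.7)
The plan is to prove the bound by identifying which rows and columns of $M_f$ can possibly be non-zero, given that $f$ uses only variables in $Y_1 \cup Z_1$. Recall that rows of $M_f$ are indexed by monic multilinear monomials $p$ in the full set $Y$, and columns by monic multilinear monomials $q$ in the full set $Z$, with $M_f[p,q]$ equal to the coefficient of $pq$ in $f$.

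The first observation is that if $p$ contains any variable $y \in Y \setminus Y_1$, then the monomial $pq$ contains $y$ as well, and since $f \in \mathbb{F}[Y_1, Z_1]$ has no occurrence of $y$, the coefficient of $pq$ in $f$ is zero. Hence every such row of $M_f$ is identically zero. An analogous argument shows that every column indexed by a monomial $q$ containing a variable from $Z \setminus Z_1$ is identically zero. Therefore the only potentially non-zero rows are indexed by the $2^{|Y_1|}$ multilinear monomials supported in $Y_1$, and similarly there are at most $2^{|Z_1|}$ potentially non-zero columns.

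From this, $\rank(M_f)$ is bounded by the number of non-zero rows, which is at most $2^{|Y_1|}$, and also by the number of non-zero columns, which is at most $2^{|Z_1|}$, since row rank equals column rank. Taking the smaller of the two yields $\rank(M_f) \leq 2^{\min\{|Y_1|,|Z_1|\}}$, as required. There is no substantial obstacle here; the only thing to be careful about is the indexing convention, namely that the matrix is defined over monomials in the global variable sets $Y$ and $Z$ rather than just $Y_1$ and $Z_1$, which is precisely what makes the zero-row/zero-column reduction the right argument.
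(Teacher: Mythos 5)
Your proof is correct and matches the argument the paper has in mind: the paper justifies the lemma with exactly the remark that row rank equals column rank, and your observation that the rows/columns indexed by monomials using variables outside $Y_1$ (resp.\ $Z_1$) vanish is the standard way to make that precise. No discrepancies to note.
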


For $f\in\mathbb{F}[X]$, it may be noted that the parital derivative matrix $M_f$  is dependent on the partition of the variable set $X$ into variables in $Y\cup Z$. In most of the cases, partition of the variable set is not apparent. In such cases, we need to consider a distribution over the set of all such partitions. We represent a partition as a bijective function $\varphi : X \rightarrow Y \cup Z$, where $|Y| = |Z| = |X|/2$. 

Let ${\cal D}$ be the  uniform distribution on the set of  all partitions $\varphi: X \to Y \cup Z$, with $|Y| = |Z| = |X|/2$.

Now, we state a useful property of the standard hypergeometric distribution that  will be needed  later.

\begin{prop}
\label{prop:hyper}{\em \cite{Raz06,RP15}}
{\em(Hypergeometric Distribution).} Let $M_1,M_2\leq S$ be integers. Let ${\cal{H}}(M_1,M_2,S)$ denote the distribution of size of the intersection of a random set of size $M_2$ and a set of size $M_1$ in a universe of size $S$. Let $\chi$ be a random variable distributed according to ${\cal{H}}(M_1,M_2,S)$ :
\begin{enumerate}
\item If $S^{1/2} \leq M_1 \leq S/2$ and $S/4 \leq M_2 \leq 3S/4$ then $\Pr[\chi=a]\leq O(S^{-1/4})$.
\item If $0 \leq M_1 \leq 2S/3$ and $S/4 \leq M_2 \leq 3S/4$ then $\Pr[\chi=a]\leq O(M_1^{-1/2})$ for any $a\leq M_1$.
\end{enumerate}
\end{prop}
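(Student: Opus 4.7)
The plan is to write down the explicit probability mass function
\[
\Pr[\chi = a] \;=\; \frac{\binom{M_1}{a}\binom{S-M_1}{M_2-a}}{\binom{S}{M_2}}
\]
and bound its maximum over $a$, which by unimodality equals the value at the mode. To establish unimodality I would compute the ratio $\Pr[\chi = a+1]/\Pr[\chi = a] = (M_1-a)(M_2-a)/((a+1)(S-M_1-M_2+a+1))$ and observe that it is monotonically decreasing in $a$ and crosses $1$ at $a^\star \approx M_1 M_2 / S$; hence the distribution is unimodal with mode $a^\star$, and it suffices to upper bound $\Pr[\chi = a^\star]$.

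Next I would apply Stirling's approximation to each of the three binomial coefficients evaluated at $a^\star$. A standard computation (the analogue of the local central limit theorem for the binomial) yields
\[
\Pr[\chi = a^\star] \;=\; \Theta\!\left(\frac{1}{\sigma}\right), \qquad \sigma^2 \;=\; \frac{M_1 M_2 (S-M_1)(S-M_2)}{S^3}.
\]
For part (1), the hypotheses $S^{1/2} \leq M_1 \leq S/2$ and $S/4 \leq M_2 \leq 3S/4$ give $M_1 \geq S^{1/2}$, $S-M_1 \geq S/2$, and $M_2, S-M_2 = \Theta(S)$, so $\sigma^2 = \Omega(S^{1/2})$ and therefore $\Pr[\chi = a] \leq O(S^{-1/4})$. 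For part (2), the assumption $M_1 \leq 2S/3$ gives $S-M_1 \geq S/3$, while $M_2, S-M_2 = \Theta(S)$ as before, so $\sigma^2 = \Omega(M_1)$ and $\Pr[\chi = a] \leq O(M_1^{-1/2})$.

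The main obstacle is making Stirling rigorous near the boundary of the parameter regime, in particular when $M_1$ is as small as $\sqrt{S}$ (part 1) or very small (part 2). For part (2), when $M_1$ is a small constant the claimed bound is trivial since probabilities are at most $1$; for $M_1$ larger one must verify that $a^\star \approx M_1 M_2/S$ is bounded away from both $0$ and $M_1$, which is guaranteed by $M_2 = \Theta(S)$ forcing $a^\star = \Theta(M_1)$ and $M_1 - a^\star = \Theta(M_1)$, so Stirling applies to $\binom{M_1}{a^\star}$ with matched upper and lower constants. A similar check using $M_1 = O(S)$ and $M_2 = \Theta(S)$ controls the remaining two binomial coefficients and handles the boundary case of part (1).
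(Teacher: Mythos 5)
The paper does not prove Proposition~\ref{prop:hyper}; it is imported verbatim from \cite{Raz06,RP15}, so there is no in-paper proof to compare against. That said, your argument is a correct reconstruction of the standard proof: you identify the explicit hypergeometric pmf, show unimodality via the consecutive-ratio test with mode near $a^\star \approx M_1M_2/S$, and then use Stirling at the mode to obtain $\Pr[\chi = a^\star] = \Theta(1/\sigma)$ with $\sigma^2 = M_1M_2(S-M_1)(S-M_2)/S^3$, after which the two parameter regimes give $\sigma = \Omega(S^{1/4})$ and $\sigma = \Omega(M_1^{1/2})$ respectively. The one place that deserves explicit care (and you flag it correctly) is that the matched-constant Stirling step requires all four of $a^\star$, $M_1-a^\star$, $M_2-a^\star$, and $S-M_1-M_2+a^\star$ to be $\omega(1)$; your observation that $M_2 = \Theta(S)$ and $S-M_2=\Theta(S)$ force $a^\star = \Theta(M_1)$ and $M_1 - a^\star = \Theta(M_1)$ handles the first two, and the identity $S-M_1-M_2+a^\star = (S-M_1)(S-M_2)/S = \Theta(S)$ together with $M_2 - a^\star = M_2(S-M_1)/S = \Theta(S)$ handles the last two, with the small-$M_1$ edge case of part (2) disposed of trivially as you note. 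This is consistent with how the cited sources argue.
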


We consider the full rank polynomial $g$ defined  by Raz and Yehudayoff~\cite{RY08}  to prove lower bounds for all models that arise in this work. 

\begin{defn}{\em (Hard Polynomial.)}
\label{def:raz-poly}
Let $N\in\mathbb{N}$ be an integer. Let $X=\{x_1,\ldots, x_{N}\}$ and $\mathcal{W} = \{w_{i,k,j} \}_{i,k,j\in[N]}$. For any two integers $i,j\in\mathbb{N}$, we define an interval $[i,j] = \{ k\in\mathbb{N}, i\leq k\leq j \}$. Let $|[i,j]|$ be the length of the interval $[i,j]$. Let $X_{i,j} = \{ x_p \mid p\in [i,j]\} $ and $W_{i,j}=\{ w_{i',k,j'}\mid i',k,j'\in[i,j] \}$. Let  $\mathbb{G}=\mathbb{F}(\mathcal{W})$, the rational function field. For every $[i,j]$ such that $|[i,j]|$ is even we define a polynomial $g_{i,j}\in\mathbb{G}[X]$  as 
 $g_{i,j}=1$ when   $|[i,j]|=0$  and 
 if $|[i,j]|>0$ then, {\small $g_{i,j }\triangleq (1+x_ix_j)g_{i+1,j-1} + \sum_{k}w_{i,k,j}g_{i,k}g_{k+1,j}.$}
where $x_k$, $w_{i,k,j}$ are distinct variables, $1\le k\le j$ and the summation is over  $k\in [i+1,j-2]$ such that  $|[i,k]|$ is  even.  Let $g\triangleq g_{1,N}$.
\end{defn}

\begin{lemma}\cite[Lemma~4.3]{RY08}
\label{lem:ry} Let $X=\{x_1,\ldots, x_{N}\}$ and $\mathcal{W} = \{w_{i,k,j} \}_{i,k,j\in[N]}$.  Let $\mathbb{G}=\mathbb{F}(\mathcal{W})$ be the set of rational functions over field $\mathbb{F}$ and $\mathcal{W}$. Let $g\in\mathbb{G}[X]$ be the polynomial in Definition \ref{def:raz-poly}. Then for any $\varphi\sim {\cal D}$, 
$\rank(M_{g^\varphi})= 2^{N/2}$.
\end{lemma}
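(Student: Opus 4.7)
I prove the following stronger statement by induction on $n = |[i,j]|$: for every even-length interval $[i,j]$ and every partition $\varphi$ of $X_{i,j}$ into classes of sizes $a$ and $b$, $\rank(M_{g_{i,j}^\varphi}) = 2^{\min(a,b)}$. The lemma is the case $[i,j] = [1,N]$ with $a = b = N/2$. Lemma~\ref{lem:rankub} gives the $\leq$ direction, and the base case $n = 0$ is trivial. For the inductive step, WLOG $a \leq b$, and write
\[
M_{g_{i,j}^\varphi} = B_0 + \sum_{k} w_{i,k,j}\, B_k,
\]
where $B_0 := M_{((1+x_ix_j)g_{i+1,j-1})^\varphi}$ and $B_k := M_{(g_{i,k}g_{k+1,j})^\varphi}$. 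The variables $\{w_{i,k,j}\}_k$ with the outer indices $i,j$ fixed do not appear in $g_{i+1,j-1}$, $g_{i,k}$, or $g_{k+1,j}$, since those sub-polynomials only involve $w$'s whose index triple lies inside a strict sub-interval; hence the entries of $B_0$ and each $B_k$ lie in the subfield $\mathbb{G}' := \mathbb{F}(\mathcal{W} \setminus \{w_{i,k,j}\}_k)$.

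\textbf{A good index.} I will locate an index $\iota \in \{0\} \cup \{\text{valid }k\}$ with $\rank(B_\iota) = 2^a$. If $x_i$ and $x_j$ lie in different classes of $\varphi$, then $(1+x_ix_j)^\varphi$ has rank $2$, $g_{i+1,j-1}^\varphi$ has rank $2^{a-1}$ by induction on $[i+1,j-1]$, and sub-multiplicativity on disjoint variable sets (Lemma~\ref{lem:sub-aditivity}) gives $\rank(B_0) = 2^a$, so take $\iota = 0$. The same works if $x_i, x_j$ both lie in the larger class $Z$ with $a \leq b - 2$. In the remaining cases (both in the smaller class $Y$, or both in $Z$ with $a = b$), I produce a valid summand index $k^*$ by a discrete intermediate value argument on $\Delta(k) := |Y \cap X_{i,k}| - |Z \cap X_{i,k}|$: the endpoint values $\Delta(i+1)$ and $\Delta(j-2)$ are forced to opposite sides of $0$ and $a-b$, and since $\Delta$ changes by at most $\pm 2$ per step, some valid $k^*$ satisfies $\Delta(k^*) \in [a-b, 0]$. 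Combined with $a \leq b$, this condition forces $\min(|Y\cap X_{i,k^*}|, |Z\cap X_{i,k^*}|) + \min(|Y\cap X_{k^*+1,j}|, |Z\cap X_{k^*+1,j}|) = a$, and by induction on both shorter intervals together with sub-multiplicativity on disjoint variable sets, $\rank(B_{k^*}) = 2^a$.

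\textbf{Algebraic non-cancellation.} Given the good index $\iota$, specialize $w_{i,k,j} \to 0$ for every $k \neq \iota$ (for all $k$ if $\iota = 0$). Such substitution is a ring homomorphism on matrix entries in $\mathbb{F}[\mathcal{W}]$, and any $(r+1) \times (r+1)$ minor that vanishes generically still vanishes after substitution, so specialization can only decrease rank. If $\iota = 0$, the specialized matrix is exactly $B_0$, of rank $2^a$. If $\iota = k^*$, the specialized matrix is $B_0' + w_{i,k^*,j} B_{k^*}$ for the specialized $B_0'$; treating $w_{i,k^*,j}$ as transcendental over $\mathbb{G}'$, the determinant of any $2^a \times 2^a$ submatrix is a polynomial in $w_{i,k^*,j}$ whose top-degree coefficient equals the corresponding minor of $B_{k^*}$; since $\rank(B_{k^*}) = 2^a$, some such minor is nonzero, forcing the specialized matrix to have rank $\geq 2^a$. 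In either case $\rank(M_{g_{i,j}^\varphi}) \geq 2^a$, completing the induction.

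The main technical obstacle is the discrete intermediate value argument producing $k^*$, which relies on the parity constraint ($|[i,k]|$ even in the summation) to pin down the endpoint values of $\Delta$ and to control the step size; the remaining ingredients are routine applications of sub-multiplicativity, the inductive hypothesis on strictly shorter intervals, and the algebraic independence of the $w$-variables that is built into the field $\mathbb{G}$.
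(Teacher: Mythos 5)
The paper does not prove this lemma itself; it cites it directly from Raz--Yehudayoff~\cite{RY08} (their Lemma~4.3). Your argument is a correct reconstruction of the standard proof from that paper: strong induction on interval length, a discrete intermediate-value search over the split points $k$ (exploiting the evenness constraint on $|[i,k]|$) to locate a summand with a balanced split, and the algebraic independence of the $w$-variables to rule out cancellation --- all of which match the ingredients in~\cite{RY08}.
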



\section{Lower Bounds for Special cases of  sm-ABPs}

\label{sec:explb}
In this section, we obtain exponential lower bound for sum of ROABPs and related special classes of syntactic multilinear ABPs. 
\subsection{Sum of  ROABPs: Proof of Theorem~\ref{thm:lb-roabp}}
\label{subsec:sumroabp}
Let $P$ be an ROABP with $\ell+1$ layers $L_0,L_1,L_2,\ldots,L_{\ell}$ computing a multilinear polynomial $f\in\mathbb{F}[x_1,x_2,\ldots,x_N]$. For every $i\in \{0,1,\ldots,\ell-1\}$, we say a layer $L_i$ is a {\em  constant} layer if every edge  going out of a  vertex in $L_i$ is labeled by a constant from $\mathbb{F}$, else we call the layer $L_i$ a {\em variable} layer. For any {\em variable} layer $L_i$ denote by $\var(L_i)$ the variable in $X$ that labels edges  going out of  vertices in $L_i$. For nodes $u,v$ in $P$, we denote by $[u,v]$ the polynomial computed by the subprogram with $u$ as the start node  and  $v$ as the terminal node and let  $X_{u,v}$  be the set of variables that occur in $P$ between layers containing $u$ and $v$ respectively. We can assume without loss of generality that $P$ does not have any two consecutive constant layers and that every ROABP $P$ has exactly $2N$ layers by introducing dummy constant layers in between consecutive variable layers. Further, we assume that the variables occur in $P$ in the order $x_1,\ldots x_N$, and hence  indices of variables in $X_{u,v}$  is an interval   $[i,j]=\{t\in\mathbb{N}\mid i\leq t\leq j\}$ for some $i< j$. (In case of a different order $\pi$ for occurrence of variables,                                                                                                  
the interval would be $[i, j ] = \{\pi(i), \pi(i+1), \ldots, \pi(j)\}$.)

\noindent{\bf Approach:}  In order to prove Theorem~\ref{thm:lb-roabp}, we use $\rank(M_{f^\varphi})$ as a complexity measure, where  $\varphi\sim {\cal D}$. The outline is as follows:

\begin{enumerate}
\item Convert the ROABP $P$ into a multilinear formula $\Phi$ with a small (super polynomial) blow up in size (Lemma~\ref{lem:abptoformula}).
\item  Obtain a partition   $B_1,\ldots, B_t$  of the variable set with $O(\sqrt{N})$ parts of almost equal size, so that there is at least one set that is highly unbalanced under a random $\varphi$ drawn from ${\cal D}$. (Observation~\ref{obs:blocks} and Lemma~\ref{lem:kbl}.)
\item Using the structure of the formula $\Phi$, show that if at least on of the $B_i$ is highly unbalanced, then the formula $\Phi$ has low rank (Lemma~\ref{lem:rank}).
\item Combining with Lemma~\ref{lem:ry} gives the required lower bound. 
\end{enumerate}
The following lemma lists useful properties of the straightforward conversion of an  ROABP into a multilinear formula: 

\begin{lemma}
\label{lem:abptoformula}
Let $P$ be an    ROABP of size $s$  computing a polynomial $f\in\mathbb{F}[x_1,\ldots,x_N]$. Then $f$ can be computed by a syntactic multilinear formula $\Phi$  of size $s^{O(\log N)}$ and depth $O(\log N)$ such that 
\begin{enumerate}
\item $\Phi$ has an alternative of layers of $+$ and $\times$ gates; and
\item $\times$ gates have  fan-in bounded by two; and
\item Every  $+$ gate $g$ in $\Phi$ computes a polynomial $[u,v]$ for some  $u, v$ in $P$; and
\item Every $\times$ gate computes a product $[u,v]\times [v,w]$, for some  $u,v$ and $w$ in $P$.
\item The root of $\Phi$ is a $+$ gate. 
\end{enumerate}  
\end{lemma}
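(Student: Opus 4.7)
The plan is to use the classical divide-and-conquer decomposition of ABPs, specialized to the ROABP setting so that read-onceness automatically gives syntactic multilinearity. Since $P$ has $2N$ layers $L_0,\ldots,L_{2N}$ (after inserting dummy constant layers), I would pick a middle layer $L_m$ with $m = N$ and use the identity
\[
    [s,t] \;=\; \sum_{v \in L_m} [s,v] \cdot [v,t],
\]
realizing the sum by a single $+$ gate $g_0$ (the root) whose children are $|L_m|$ many $\times$ gates, each with exactly two children computing $[s,v]$ and $[v,t]$. I would then recurse on each sub-ROABP $[s,v]$ and $[v,t]$, picking the middle layer of the restricted layer range each time, and stopping when the sub-ROABP consists of a single edge (which is either a variable or a field constant).

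Syntactic multilinearity is the clean part: because $P$ is read-once, every variable labels edges in at most one layer of $P$, so the variable set appearing in layers $L_0,\ldots,L_{m-1}$ (which defines $X_{s,v}$) is disjoint from the one appearing in $L_m,\ldots,L_{2N-1}$ (which defines $X_{v,t}$). Hence each $\times$ gate of the form $[s,v]\cdot[v,t]$ satisfies the syntactic multilinearity condition, and this is inherited at every recursive level. Properties (3)--(5) are immediate from the construction: $+$ gates are introduced only to express a subprogram $[u,v]$, $\times$ gates only to express a two-factor product $[u,v]\cdot[v,w]$, and the root is the top $+$ gate $g_0$. For the alternation property (1) and fan-in (2): each $+$ gate feeds $\times$ gates (those produced in its expansion) and each $\times$ gate feeds $+$ gates (the roots of the two recursive subformulas), giving a strict alternation of $+$ and $\times$ layers with $\times$-fan-in exactly two; at the leaves, if a sub-ROABP is a single edge, I would pad with trivial unary $+$ gates to preserve alternation without affecting the computed polynomial.

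For the size and depth bounds, let $w \le s$ denote the width of $P$, and let $T(\ell)$ be the size of the formula produced for an ROABP with $\ell$ layers. The recursion
\[
    T(\ell) \;\le\; 2w \cdot T(\ell/2) + O(w)
\]
unrolls to $T(2N) = w^{O(\log N)} = s^{O(\log N)}$, since $w\le s$. Each recursive step adds only a $+$ layer and a $\times$ layer, so after $O(\log N)$ levels the depth is $O(\log N)$ as claimed.

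The main technical nuisance, rather than obstacle, is bookkeeping around the base case and the insertion of dummy gates to preserve strict alternation without breaking the invariant that $+$ gates compute some $[u,v]$ and $\times$ gates some $[u,v]\cdot[v,w]$. In particular, when a sub-ROABP $[u,v]$ degenerates to a constant or variable edge, I need to wrap it in a unary $+$ gate (with $u,v$ the endpoints of that edge) so that the parent $\times$ gate still sees a $+$ child. Once this padding is handled uniformly, all five structural properties hold verbatim, and the size/depth estimates above give the lemma.
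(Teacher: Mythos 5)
Your proposal is correct and takes essentially the same divide-and-conquer route as the paper: split the ROABP at a middle layer, write $[s,t]=\sum_v [s,v]\cdot[v,t]$, and recurse. The paper chooses the split layer to balance the number of variables on each side rather than the number of layers, but given the standing normalization to $2N$ alternating variable/constant layers these coincide; your extra bookkeeping about unary $+$ padding at the leaves to preserve alternation is a detail the paper leaves implicit.
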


\makeproof{lem:abptoformula}{
The proof is a simple divide and conquer conversion of branching programs to formulas.
Let $P$ be an   ROABP with $\ell+1$ layers $L_0,L_1,\ldots,L_{\ell}$ with   $s$  and  $t$ as the start and terminal nodes respectively. Let $L_i$ be such  that $|\var(L_0)\cup\var(L_1) \cup  \cdots \cup \var(L_i)|, |\var(L_{i+1})\cup \cdots\cup \var(L_\ell)| \in \{\lceil{N/2}\rceil, \lfloor N/2\rfloor\}$ and  $u_{i_1},u_{i_2},\ldots,u_{i_k} (k\leq s)$ be the nodes at the layer $L_i$. Then,
\begin{equation}
\label{eqn:formula1}
f = \sum\limits_{j=1}^{k} [s,u_{i_j}]\times [u_{i_j},t]
\end{equation}
where $[u,v]$ is the polynomial computed by the subprogram with start node $u$ and $v$ as the terminal node. By induction on $N$,  Let $\phi_j$ (respectively $\psi_j$) be the formula computing $[s,u_{i_j}]$ (respectively $[u_{i_j},t]$ ).
Then $\Phi = \sum_{j=1}^k \phi_j\times \psi_j.$ By induction, it follows  that the resulting formula $\Phi$ has size $s^{O(\log N)}$, depth $O(\log N)$ and is syntactic multilinear. Also,  by the construction above, it can be verified that $\Phi$ satisifes the conditions $1$ to $5$. \qed
}

Let $P$ be an   ROABP  and $\Phi$ be the syntactic multilinear formula obtained from $P$ as in Lemma~\ref{lem:abptoformula}. Let $g$ be a  $+$ (respectively $\times$) gate in $\Phi$ computing $[u_g, v_g]$ (respectively $[u_g,v_g]\times [v_g, w_g]$) for some nodes $u_g$, $v_g$ and $w_g$ in $P$. Since $P$ is an ROABP with variable order $x_1,x_2,\ldots x_N$, the set $X_{u_g, v_g}$ (respectively $ X_{u_g,v_g} \cup X_{v_g, w_g}$) corresponds to an interval $I_g$ in $\{1,\ldots, N\}$. We call $I_g$ the {\em interval associated with } $g$. By the construction of $\Phi$ in  Lemma \ref{lem:abptoformula}, the intervals  have the following properties : 
\begin{enumerate}
\item For any gate $g$ in $\Phi$ at product-height $i$, $|I_g| \in [ N/2^i - i,N/2^i+i]$. 
\item For any $+$ gate $g$ in $\Phi$ with children $g_1,\ldots,g_w$, we have $I_g=I_{g_1}=\cdots=I_{g_w}$. 
\item Let $\cal{I}$ be the set of all distinct intervals associated with gates at product-height $\frac{\log N}{2}$ in $\Phi$. The intervals in ${\cal I}$ are disjoint and $|{\cal{I}}|= \Theta(\sqrt{N})$. For any $I_j\in{\cal{I}}$, $\sqrt{N} -\log N \leq |I_j|\leq \sqrt{N}+\log N$.
\end{enumerate}

We call the intervals in ${\cal{I}}$ as {\em blocks} $B_1,B_2,\ldots,B_{t}$ in $\Phi$ where $t=\Theta(\sqrt{N})$.  For any block $B_\ell=[i_\ell,j_\ell]$, $X_{\ell} = \{x_{i_a}\mid i_\ell\leq i_a\leq j_\ell\}=\var(L_{i_\ell})\cup \var(L_{i_\ell+1})\cup \cdots \cup \var(L_{j_\ell})$.

Let $\varphi: X \to Y \cup Z$ be a partition. We say a block $B_\ell$ is $\kub$ with respect to $\varphi$ iff $||Y\cap \varphi(X_\ell)|-|Z\cap \varphi(X_\ell)||>k$. For any two intervals $I_1=[i_1,j_1]$ and $I_2=[i_2,j_2]$ we say $I_1\subseteq I_2$ iff $i_2\leq i_1\leq j_1\leq j_2$.   

\begin{obs}
\label{obs:blocks}
Let $P$ be an ROABP and $\Phi$ be the syntactic multilinear formula obtained from $P$  and $B_1,\ldots, B_t$ be the blocks in $\Phi$. Then, for any gate $v$ in $\Phi$, 
\begin{itemize}
\item[(1)] If $v$ is at a product-height $<\frac{\log N}{2}$ in $\Phi$, then $B_i\subseteq I_v$ for some block $B_i$.
 \item[(2)]If $v$ is at product-height $>\frac{\log N}{2}$ in $\Phi$,  then for every $1 \le i \le t$, either $I_v \subseteq B_i$ or $B_i \cap I_v = \emptyset$.
 \item[(3)]If $v$ is at product-height $\frac{\log N}{2}$ in $\Phi$,  then for every $1 \le i \le t$, either $I_v = B_i$ or $B_i \cap I_v = \emptyset$. 
\end{itemize}
\end{obs}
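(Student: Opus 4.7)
The plan is to exploit the dyadic, laminar structure of the intervals $\{I_g\}_{g \in \Phi}$ that emerges from the divide-and-conquer construction in Lemma~\ref{lem:abptoformula}. First I would prove by induction on product-height $i$ that the distinct intervals associated with gates at product-height $i$ form a partition of $[1,N]$ into consecutive sub-intervals of size $\Theta(N/2^i)$. The base case $i = 0$ is trivial since the root's interval is $[1,N]$. For the inductive step, property~(2) of the interval list in the excerpt shows that a $+$ gate inherits its parent's interval, while a $\times$ gate with children $g_1, g_2$ partitions $I_g$ into $I_{g_1}$ and $I_{g_2}$; these are complementary contiguous halves of $I_g$ because the ROABP reads variables in the fixed order $x_1,\ldots,x_N$ and the splitting layer chosen in Lemma~\ref{lem:abptoformula} is determined by variable count.

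A direct consequence is that the family $\mathcal{F} = \{I_g : g \in \Phi\}$ is laminar: any two intervals in $\mathcal{F}$ are either nested or disjoint. In particular, at product-height exactly $\log N / 2$ the intervals in $\mathcal{F}$ form the set $\mathcal{I} = \{B_1, \ldots, B_t\}$ of blocks, by the definition given just before the observation. Since the construction of $\Phi$ keeps splitting intervals until they reach constant size, every gate $v$ with product-height $< \log N/2$ has descendants at product-height exactly $\log N / 2$, and every gate $v$ with product-height $> \log N / 2$ has an ancestor at product-height exactly $\log N/2$ (since along any child-to-parent edge the product-height drops by at most one).

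The three parts of the observation then drop out. For (1), if $v$ has product-height $< \log N / 2$, take any descendant $v'$ of $v$ at product-height $\log N / 2$, so $B_i := I_{v'} \subseteq I_v$. For (2), if $v$ has product-height $> \log N / 2$, let $v''$ be an ancestor of $v$ at product-height exactly $\log N / 2$; then $I_v \subseteq I_{v''} = B_j$ for some $j$, and disjointness of the blocks forces $I_v \cap B_i = \emptyset$ for $i \ne j$. Case (3) is immediate: $I_v \in \mathcal{I}$ by definition, so $I_v = B_i$ for some $i$, and again disjointness gives $B_k \cap I_v = \emptyset$ for $k \ne i$. There is no real obstacle here; the only verification of substance is the inductive step at $\times$ gates, which is immediate from the variable-count-based splitting rule used to build $\Phi$.
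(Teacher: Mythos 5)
Your proof is correct. The paper states Observation~\ref{obs:blocks} without an explicit proof, relying on the three interval properties listed immediately before it; your argument supplies exactly the missing details --- the laminar (dyadic) structure of the intervals coming from the fixed variable order and the median-split recursion of Lemma~\ref{lem:abptoformula}, together with the monotonicity of product-height along root-to-leaf paths and the balanced depth of the formula --- and derives (1), (2), (3) cleanly from them, so it matches the paper's implicit reasoning.
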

We need the following before formalizing Step 3 in the approach outlined. 
\begin{defn}({\kbl~formula}.)
Let $\varphi: X \to Y \cup Z$  be a partition and $B$ be a $k$-unbalanced block in $\Phi$ with respect to $\varphi$.  A gate $v$ with product-height $\le \frac{\log N}{2}$  in  $\Phi$ is \kbl~  if either 
\begin{itemize}
\item[(i)] $I_v=B$; Or 
\item[(ii)] $B\subseteq I_v$ and, 
 \begin{itemize}
  \item  If  $v$ is a sum gate with children $v_1,\ldots,v_w$, the  gates $v_1,\ldots,v_w$ are \kbl. 
  \item If $v$ is a product gate with children $v_1,v_2$, then atleast one of $v_1$ or $v_2$ are \kbl.    
\end{itemize}
\end{itemize}
A formula $\Phi$ is \kbl\ with respect to  $\varphi$~ if the root   $r$   is \kbl\ for some $k$-unbalanced block $B\in \{ B_1,B_2,\ldots,B_t\}$ where $t=\Theta(\sqrt{N})$.
\end{defn}

In the following, we note that the partial derivative matrix of  \kbl~ formulas have low rank:

\begin{lemma}
\label{lem:rank}
Let $P$ be an ROABP computing $f$ and $\Phi_P$ be the multilinear formula obtained from $P$ computing $f$. Let $\varphi\sim{\cal D}$ such that block $B$ is \kub~ in $\Phi$ with respect to $\varphi$. Let $v$ be a gate in $\Phi$ that is \kbl~then $\rank(M_{f_v^\varphi})\leq |\Phi_v|\cdot 2^{|X_v|/2-k/2}$.
\end{lemma}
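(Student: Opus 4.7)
The plan is to prove the bound by induction on the number of gates in the sub-formula $\Phi_v$, unrolling the recursive definition of $k_B$-hitting downward until one reaches gates whose interval is exactly $B$. The two workhorses throughout are Lemma~\ref{lem:sub-aditivity} (sub-additivity and sub-multiplicativity of the partial derivative rank) and Lemma~\ref{lem:rankub} (the trivial $2^{\min\{|Y\cap\varphi(X)|,|Z\cap\varphi(X)|\}}$ bound).

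In the base case $I_v=B$, one has $X_v=X_B$, and $k$-unbalancedness of $B$ under $\varphi$ together with $|Y\cap\varphi(X_v)|+|Z\cap\varphi(X_v)|=|X_v|$ forces $\min\{|Y\cap\varphi(X_v)|,|Z\cap\varphi(X_v)|\}\le |X_v|/2-k/2$; Lemma~\ref{lem:rankub} then delivers the required bound, since $|\Phi_v|\ge 1$. When $v$ is a $+$-gate with $B\subsetneq I_v$, the definition of $k_B$-hitting forces every child $v_i$ to be $k_B$-hitting, and Item~2 of the interval properties gives $X_{v_i}=X_v$; sub-additivity and the inductive hypothesis then yield
\[
\rank(M_{f_v^\varphi})\le\sum_i\rank(M_{f_{v_i}^\varphi})\le\sum_i|\Phi_{v_i}|\cdot 2^{|X_v|/2-k/2}\le|\Phi_v|\cdot 2^{|X_v|/2-k/2}.
\]
For a $\times$-gate $v$ with children $v_1,v_2$, the definition ensures that at least one of them, say $v_1$, is $k_B$-hitting; syntactic multilinearity gives the disjoint decomposition $X_v=X_{v_1}\sqcup X_{v_2}$; sub-multiplicativity, the inductive hypothesis on $v_1$, and Lemma~\ref{lem:rankub} on $v_2$ combine to give
\[
\rank(M_{f_v^\varphi})=\rank(M_{f_{v_1}^\varphi})\cdot\rank(M_{f_{v_2}^\varphi})\le|\Phi_{v_1}|\cdot 2^{|X_{v_1}|/2-k/2}\cdot 2^{|X_{v_2}|/2}\le|\Phi_v|\cdot 2^{|X_v|/2-k/2}.
\]

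The one conceptual point to verify is the product case, where only one child is required to be hitting: the $-k/2$ saving is preserved because the unbalancedness of $B$ lives entirely inside $X_{v_1}\supseteq X_B$, while the disjoint complement $X_{v_2}$ need only be bounded by the trivial $2^{|X_{v_2}|/2}$. This is precisely why the definition of $k_B$-hitting requires \emph{all} children of a sum gate to be hitting (so sub-additivity sums compatible good bounds) but only \emph{one} child of a product gate (so sub-multiplicativity multiplies one good bound by the trivial one). With this correspondence between the hitting definition and the two rank inequalities in place, the induction goes through routinely; the main mild obstacle is really just verifying that the interval properties of Lemma~\ref{lem:abptoformula} and the alternation of $+/\times$ layers are consistent with this reading of the $k_B$-hitting definition.
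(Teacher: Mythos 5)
Your proof is correct and follows essentially the same inductive argument as the paper's: the base case at $I_v=B$ via Lemma~\ref{lem:rankub}, the $+$-gate case by sub-additivity over the (all $k_B$-hitting) children, and the $\times$-gate case by sub-multiplicativity with the trivial $2^{|X_{v_2}|/2}$ bound on the non-hitting child. The one small imprecision — invoking $X_{v_i}=X_v$ for a $+$-gate when the interval property only gives $I_{v_i}=I_v$ — is shared with the paper itself and is harmless, since $X_{v_i}\subseteq X_v$ already suffices to complete the estimate.
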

\makeproof{lem:rank}{
Proof is by induction on the structure of the formula. \\
For the base case, let $v$ be a gate in $\Phi$ at product-height  $(\log N)/2$.  By Observation \ref{obs:blocks}, either $I_v =  B$ or $I_v\cap B=\emptyset$. As $v$ is \kbl, $I_v = B$. Since $B$ is $\kub$, we have $X_{v}$ is $\kub$. By Lemma \ref{lem:rankub}, $\rank(M_{f_v^\varphi})\leq 2^{\min\{|Y_v|,|Z_v|\}} \leq 2^{|X_v|/2-k/2}$. For the induction step, let $v$ be a node at product depth $ \geq (\log N)/2$. 
\begin{description}
\item[Case~1]$v$ is a product gate with two children $v_1,v_2$. Since $v$ is \kbl, atleast one of $v_1$ or $v_2$ is \kbl. Without loss of generality let $v_1$ be \kbl. By induction hypothesis, $\rank(M_{f_{v_1}^\varphi})\leq |\Phi_{v_1}|\cdot2^{|X_{v_1}|/2-k/2}$ and $\rank(M_{f_{v_2}^\varphi})\leq 2^{|X_{v_2}|/2}$. Then $\rank(M_{v})\leq \rank(M_{v_1})\cdot \rank(M_{v_2}) \leq |\Phi_{v_1}|\cdot2^{|X_{v_1}|/2+|X_{v_2}|/2-k/2}\leq |\Phi_v|\cdot 2^{|X_v|/2-k/2}$ as $X_v=X_{v_1}\cup X_{v_2}$.
\item[Case~2] $v$ is a sum gate with children $v_1,v_2,\ldots,v_w$. Since $v$ is \kbl, every child of $v$ is \kbl. Then by induction hypothesis, $\rank(M_{v_i})\leq |\Phi_{v_i}|\cdot2^{|X_{v_i}|/2-k/2}$. As $X_{v_1}=X_{v_2}=\cdots=X_{v_w}$, $\rank(M_{v})\leq |\Phi_v|\cdot 2^{|X_v|/2-k/2}$. \qed
\end{description}
}

\begin{obs}
\label{obs:not-kbl}
Let $\varphi: X \to Y \cup Z$  be a partition and $B$ be a $k$-unbalanced block in $\Phi$ with respect to $\varphi$. 
\begin{enumerate}
\item If a $+$ gate $v$ in $\Phi$ with children $v_1,\ldots,v_w$ is not \kbl~  then $I_{v_j}\cap B =\emptyset$ for some $j\in[w]$. 
\item If a $\times$ gate $v$  with children $v_1,v_2$ is not \kbl~  then $I_{v_1}\cap B =\emptyset$ and $I_{v_2}\cap B =\emptyset$.
\end{enumerate}
\end{obs}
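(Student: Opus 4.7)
The proof hinges on an \emph{alignment} property that I establish first: for every gate $v$ in $\Phi$ at product-height at most $\frac{\log N}{2}$, the interval $I_v$ is a disjoint union of whole blocks from $\{B_1,\ldots,B_t\}$; in particular, for the distinguished block $B$, either $B \subseteq I_v$ or $B \cap I_v = \emptyset$. The base case (product-height exactly $\frac{\log N}{2}$) is immediate from Observation~\ref{obs:blocks}(3). For lower product-heights, alignment propagates upward by induction, handling $\times$-gates before $+$-gates at each level: the interval of a $+$-gate equals that of any of its $\times$-children by property~$2$ of intervals, while the interval of a $\times$-gate is the disjoint union of its two $+$-children's intervals, so unions of whole blocks remain unions of whole blocks.

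Given alignment, both parts of the observation reduce to a single stronger claim: \emph{if $v$ is a gate at product-height $\leq \frac{\log N}{2}$ that is not \kbl, then $I_v \cap B = \emptyset$}. Part~(2) follows because $I_{v_1}, I_{v_2} \subseteq I_v$ for the children of a $\times$-gate. Part~(1) follows (in fact with the stronger ``for all $j$'') because $I_{v_j} = I_v$ for every child $v_j$ of a $+$-gate, by property~$2$.

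To prove the stronger claim, suppose $v$ at product-height $p$ is not \kbl and, for contradiction, that $I_v \cap B \neq \emptyset$. Alignment forces $B \subseteq I_v$, and clause~(i) of the \kbl definition being violated forces $I_v \neq B$, hence $B \subsetneq I_v$; clause~(ii) must therefore fail. If $v$ is a $+$-gate, some $\times$-child $v_j$ at the same product-height $p$ is not \kbl, and applying the claim to $v_j$ gives $I_{v_j} \cap B = \emptyset$; but $I_{v_j} = I_v \supsetneq B$ is a contradiction. If $v$ is a $\times$-gate, both $+$-children at product-height $p+1$ are not \kbl, and the inductive hypothesis at that higher product-height yields $I_{v_i} \cap B = \emptyset$ for $i=1,2$, contradicting $I_{v_1} \sqcup I_{v_2} = I_v \supseteq B$.

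The main subtlety is the induction ordering: the claim for $+$-gates at product-height $p$ uses the claim for $\times$-gates at the same $p$, while the claim for $\times$-gates at $p$ uses the claim at $p+1$. I therefore induct on product-height from $\frac{\log N}{2}$ down to $0$ and, within each level, handle $\times$-gates before $+$-gates. The base case at product-height $\frac{\log N}{2}$ is direct from alignment: since $|I_v|$ there lies within $\log N$ of $|B|$, the inclusion $B \subseteq I_v$ forces $I_v = B$, making $v$ \kbl by clause~(i) and contradicting the hypothesis.
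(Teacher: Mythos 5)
The paper gives no proof of this observation, so there is nothing to compare against directly; I am evaluating your argument on its own terms.

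Your overall strategy is sound, and more importantly you correctly identified the key hidden hypothesis: the observation is simply \emph{false} without an alignment property guaranteeing that for every gate $v$ at product-height $\le \frac{\log N}{2}$, either $B \subseteq I_v$ or $B \cap I_v = \emptyset$. (If $B$ could straddle a boundary of $I_v$, then $v$ would be not \kbl\ while still having $I_v \cap B \neq \emptyset$.) Your reduction of both parts to the single claim ``not \kbl\ implies $I_v \cap B = \emptyset$'' using property~2 of intervals is clean, and the downward induction on product-height with $\times$-gates handled before $+$-gates at each level is exactly the right bookkeeping to make the recursion in the definition of \kbl\ line up with the recursion in the proof.

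One step in your argument is reasoned incorrectly, though the conclusion it reaches is true. In the base case you write that $B \subseteq I_v$ forces $I_v = B$ ``since $|I_v|$ there lies within $\log N$ of $|B|$.'' The size bound does not force equality: the inclusion together with $|I_v| \le |B| + 2\log N$ still permits $B \subsetneq I_v$. What actually closes the base case is Observation~\ref{obs:blocks}(3) (equivalently, the disjointness of the blocks $B_1,\ldots,B_t$ together with $I_v$ being one of them): at product-height $\frac{\log N}{2}$, either $I_v = B$ or $I_v \cap B = \emptyset$, full stop. You should replace the size argument by a direct appeal to that observation. With that fix, the proof is correct.
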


Further, we observe that, proving that a formula $\Phi$ is \kbl\ with respect to a partition, is equivalent to showing existence of a $k$-unbalanced block among $B_1,\ldots, B_t$.
\begin{obs}
\label{obs:hitting}
Let $B_1,\ldots, B_t$ be the blocks of the formula $\Phi$ obtained from an   ROABP $P$. Let $B\in\{B_1,\ldots, B_t\}$ be a $k$-unbalanced block with respect to a partition $\varphi$. Then, $\Phi$ is \kbl\ with respect to $\varphi$.
\end{obs}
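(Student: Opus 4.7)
The goal is to show that if some block $B \in \{B_1, \dots, B_t\}$ is $k$-unbalanced under the partition $\varphi$, then the root $r$ of $\Phi$ is \kbl, which by definition requires exhibiting a witness block for $r$ — and $B$ is the natural candidate. My plan is to prove the stronger statement by (reverse) induction on product-height: for every gate $v$ in $\Phi$ with product-height $h(v) \le (\log N)/2$ such that $B \subseteq I_v$, the gate $v$ is \kbl. Since the root has interval $\{1, \dots, N\} \supseteq B$, applying the claim at $r$ yields that $\Phi$ is \kbl\ with respect to $\varphi$.

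For the base case $h(v) = (\log N)/2$, I would invoke Observation~\ref{obs:blocks}(3): either $I_v = B_i$ for some $i$ or $B_i \cap I_v = \emptyset$. Because $B$ is itself one of the $B_i$'s and $B \subseteq I_v$, the intersection is nonempty, forcing $I_v = B$, which is condition (i) of the definition. For the inductive step at $h(v) < (\log N)/2$, I would split on the gate type. If $v$ is a $+$ gate with children $v_1, \dots, v_w$, property 2 of the intervals gives $I_{v_j} = I_v \supseteq B$, and each child sits at product-height $h(v) + 1 \le (\log N)/2$; the induction hypothesis applied to each child places every child in \kbl\ form, and this is exactly the $+$-clause of condition (ii). If $v$ is a $\times$ gate with children $v_1, v_2$ (still at product-height $\le (\log N)/2$ since the product-height does not increase when descending into a $\times$ gate's children that are $+$ gates), then $I_v = I_{v_1} \sqcup I_{v_2}$ is a disjoint union of contiguous subintervals.

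The main obstacle — and the only step that requires more than definition-chasing — is arguing that at a $\times$ gate the block $B$ is contained entirely in $I_{v_1}$ or entirely in $I_{v_2}$, rather than straddling the split. To handle this I would appeal to the balanced divide-and-conquer structure of the construction in Lemma~\ref{lem:abptoformula}: the intervals at product-height $(\log N)/2$ are exactly the blocks $B_1, \dots, B_t$, and any interval at a strictly smaller product-height is obtained by merging whole consecutive blocks, so it is a disjoint union of blocks. Consequently $I_{v_1}$ and $I_{v_2}$ are each unions of blocks, and since $B$ is one indivisible block it lies wholly in one side, say $B \subseteq I_{v_1}$. The induction hypothesis then declares $v_1$ to be \kbl, satisfying the $\times$-clause of condition (ii).

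Putting these cases together yields the inductive claim; specializing to the root completes the proof. The overall argument is short and essentially structural — the interesting content is captured by Observation~\ref{obs:blocks} together with the block-refinement structure of $\Phi$, and the induction merely propagates this structural fact upward through the formula.
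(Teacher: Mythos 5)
Your proof is the direct, bottom-up counterpart of the paper's argument, which runs by contradiction top-down: the paper assumes $\Phi$ is not \kbl, uses Observation~\ref{obs:not-kbl} to propagate non-\kbl ness from the root downwards, and eventually collides with the gate $g$ at product-height $(\log N)/2$ satisfying $I_g = B$ (which is trivially \kbl\ by clause (i)). You instead build upward from $g$, showing that every gate at or above the block level whose interval contains $B$ is \kbl, and then specialize to the root. The two arguments rest on exactly the same structural facts about the intervals of $\Phi$, so they are contrapositives of one another; your version has the virtue of making the crucial \emph{no-straddling} observation explicit — that above the block level each $I_v$ is a disjoint union of whole blocks, so $B$ falls entirely into one factor of any $\times$ gate — whereas the paper hides this inside the unargued Observation~\ref{obs:not-kbl}. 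One technical wrinkle you should fix: product-height, as the paper uses it, counts $\times$ gates \emph{strictly above} a node, so a $+$ gate and the $\times$ gates immediately below it share the same product-height, and it is descending from a $\times$ gate to its $+$ children that increases product-height by one — the opposite of what you wrote. Consequently a reverse induction on product-height alone is not well-founded in the $+$ gate case. Replace the parameter by the depth of $v$ in $\Phi$ (or the length of the path from $v$ down to the block level); every clause of your inductive step then goes through unchanged.
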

\makeproof{obs:hitting}{
Suppose not, $B\in\{B_1,\ldots, B_t\}$ be a $k$-unbalanced block with respect to a partition $\varphi$ and $\Phi$ is not \kbl\ with respect to $\varphi$. Let gate $g$ be at product-height $(\log N)/2$ in $\Phi$ such that $I_g=B$. Since $\Phi$ is not \kbl\, root gate $r$ of $\Phi$ is not \kbl. We know $r$ is a $+$ gate with children say $r_1,r_2,\ldots,r_w$. By Observation \ref{obs:not-kbl}, there exists $i\in[w]$ such that $r_i$ is not \kbl\ i.e. $I_{r_i}\cap B=\emptyset$. Also as $r$ is a $+$ gate, $I_{r_1}=I_{r_2} = \cdots = I_{r_w}$. This implies that none of $r_1,r_2,\ldots,r_w$ are \kbl. $r_1,r_2,\ldots,r_w$ being product gates,  $r_1,r_2,\ldots,r_w$ are not \kbl\ implies that none of their children are \kbl\ by Observation \ref{obs:not-kbl}. In this way, we get that no descendant of $r$ is \kbl\ which is a contradiction to the fact that gate $g$ is \kbl. \qed
}
In the remainder of the section, we estimate the probability  that at least  one of the blocks among  $B_1,\ldots, B_t$ is $k$-unbalanced. 

\begin{lemma}
\label{lem:kbl}
Let $P$ be an ROABP computing a polynomial $f\in\mathbb{F}[x_1,\ldots,x_N]$ and $\Phi_P$ be the syntactic multilinear formula computing $f$. Let $\varphi\sim{\cal D}$. Then, for any $k \le N^{1/5}$, there exists a block $B$ in $\Phi$ such that such that
$$\Pr\limits_{\varphi\sim {\cal D}}[\text{$\Phi$ is \kbl~}]\geq 1-2^{-\Omega(\sqrt{N}\log N)}$$
\end{lemma}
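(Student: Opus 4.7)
The plan is to establish
$$
\Pr_{\varphi \sim \mathcal{D}}\bigl[\text{every block }B_i\text{ is $k$-balanced}\bigr] \le 2^{-\Omega(\sqrt{N}\log N)},
$$
which by Observation~\ref{obs:hitting} is equivalent to the desired lower bound on the probability that $\Phi$ is \kbl\ for some $B \in \{B_1,\ldots,B_t\}$. Writing $\chi_i := |\varphi(X_i) \cap Y|$, I need to upper bound $\Pr\bigl[\bigcap_i E_i\bigr]$, where $E_i := \bigl\{ |2\chi_i - |B_i|| \le k \bigr\}$.

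I will argue via sequential sampling: reveal $\chi_1,\chi_2,\ldots$ one block at a time. Conditional on $\chi_1 = c_1,\ldots,\chi_{i-1} = c_{i-1}$, the variable $\chi_i$ is distributed as $\mathcal{H}(|B_i|, M_2', S')$ with $S' = N - \sum_{j<i} |B_j|$ and $M_2' = N/2 - \sum_{j<i} c_j$. I restrict attention to the first $t' = t/2 = \Theta(\sqrt{N})$ blocks and to sequences $(c_1,\ldots,c_{i-1})$ consistent with $E_1 \cap \cdots \cap E_{i-1}$. For any such consistent sequence, both hypotheses of Proposition~\ref{prop:hyper}(2) hold uniformly: since $\sum_{j<i}|B_j| \le (t/2)(\sqrt{N}+\log N) \le N/2 + O(\sqrt{N}\log N)$, we get $S' \ge N/2 - O(\sqrt{N}\log N) \gg (3/2)|B_i|$; and since $|2c_j - |B_j|| \le k$, we get $|M_2' - S'/2| \le ik/2 \le O(\sqrt{N}\cdot N^{1/5}) = O(N^{7/10}) \ll S'/4$, placing $M_2'$ in $[S'/4,\, 3S'/4]$.

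Proposition~\ref{prop:hyper}(2) then yields $\Pr[\chi_i = a \mid \chi_1=c_1,\ldots,\chi_{i-1}=c_{i-1}] \le O(|B_i|^{-1/2}) = O(N^{-1/4})$. Since the event $E_i$ confines $\chi_i$ to at most $2k+1 \le 3N^{1/5}$ values, I obtain $\Pr[E_i \mid \chi_1=c_1,\ldots,\chi_{i-1}=c_{i-1}] = O(N^{1/5}\cdot N^{-1/4}) = O(N^{-1/20})$ uniformly over the conditioning, and hence $\Pr[E_i \mid E_1 \cap \cdots \cap E_{i-1}] = O(N^{-1/20})$ by averaging. Telescoping,
$$
\Pr_\varphi\Bigl[\bigcap_{i=1}^{t} E_i\Bigr] \;\le\; \Pr_\varphi\Bigl[\bigcap_{i=1}^{t'} E_i\Bigr] \;=\; \prod_{i=1}^{t'} \Pr\bigl[E_i \mid E_1\cap\cdots\cap E_{i-1}\bigr] \;\le\; \bigl(O(N^{-1/20})\bigr)^{\Theta(\sqrt{N})} \;=\; 2^{-\Omega(\sqrt{N}\log N)}.
$$

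The main obstacle is keeping the conditional hypergeometric estimate valid throughout the sampling process. Proposition~\ref{prop:hyper}(2) requires $|B_i|$ small relative to the remaining universe and the remaining $Y$-budget near half, and both conditions weaken as we approach the last blocks. Truncating at $t' = t/2$ avoids this degradation while still furnishing $\Theta(\sqrt{N})$ independent-looking conditional factors of size $N^{-1/20}$, whose product is exactly the claimed $2^{-\Omega(\sqrt{N}\log N)}$ tail.
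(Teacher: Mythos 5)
Your proposal is correct and follows essentially the same route as the paper: both reduce via Observation~\ref{obs:hitting} to showing some block is $k$-unbalanced with high probability, then run a chain-rule argument over the blocks, bounding each conditional factor by $O(kN^{-1/4}) = O(N^{-1/20})$ via the hypergeometric point-mass estimate, and finally telescope to $2^{-\Omega(\sqrt{N}\log N)}$.

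The only real difference is in how the hypotheses of Proposition~\ref{prop:hyper} are verified. The paper invokes part~(1), requiring $S^{1/2}\le M_1\le S/2$, and truncates at $\ell-1 < \sqrt{N}/4$ so that $S$ remains $\approx 3N/4$; there $M_1 = |B_\ell| \approx \sqrt{N}$ sits right at the borderline $S^{1/2}$, and the paper's stated range for $M_2$ relies implicitly on the prior blocks being nearly balanced without spelling out the conditioning. You instead invoke part~(2), whose condition $0 \le M_1 \le 2S/3$ is far from tight for $M_1 = \Theta(\sqrt{N})$, and you make the conditioning on $E_1\cap\cdots\cap E_{i-1}$ explicit, deriving $|M_2' - S'/2| = O(N^{7/10}) \ll S'/4$ uniformly over all consistent histories before averaging. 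This is cleaner: it removes the borderline $S^{1/2}\le M_1$ check and makes the ``$\Pr[E_i \mid E_1\cap\cdots\cap E_{i-1}]$'' step airtight rather than tacit. Both yield the identical per-block factor $O(N^{-1/4})$ and the same final bound, so the arguments are equivalent in substance.
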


\makeproof{lem:kbl}{
By Observation \ref{obs:hitting}, $\Pr\limits_{\varphi\sim {\cal D}}[\text{$\Phi$ is \kbi~}]\geq \Pr[\exists~i,\text{$B_i$ is $\kub$}]$. Here we1 estimate $\Pr[\exists~i,\text{$B_i$ is $\kub$}]$. Let $P$ be an ROABP and $B_1,\ldots,B_{t}$ be blocks in $\Phi$.  Note that for any $\ell\in[t],~\sqrt{N} - \log N\leq|X_\ell|\leq \sqrt{N} + \log N$. Let ${\cal{E}}_i$ be the event that the block $B_i$ is not \kub.  For any block $\ell\in[t]$, denote $Y_\ell =\varphi(X_\ell)\cap Y$. Let $\chi =|Y_\ell|$ be a  random variable.  Observe that $\chi$ has the distribution ${\cal{H}}(S,M_1,M_2)$ with \begin{align*}
S &= N-(|X_1|+\cdots+|X_{\ell-1}|) \in [ N-(\ell-1)(\sqrt{N} + \log N),  N-(\ell-1)(\sqrt{N} - \log N)] \\
M_1 &= |X_\ell| \\
M_2 &= N/2-(|Y_1|+\cdots+|Y_{\ell-1}|)\in [ N/2-(\ell-1)(\sqrt{N} + \log N),  N/2-(\ell-1)(\sqrt{N} - \log N)]
\end{align*}
For $(\ell-1)<\sqrt{N}/4$, we have :
\begin{itemize}
\item[(i)] $3N/4\leq S \leq N$; and 
\item[(ii)] $S/4\leq N/4\leq M_2 \leq N/2\leq 2S/3 \leq 3S/4 $; and
\item[(iii)] have $S^{1/2}\leq \sqrt{N}\leq M_1 \leq 2\sqrt{N}\leq 3N/8 \leq S/2$ for large enough $N$.
\end{itemize}
By Proposition \ref{prop:hyper} (1), we have
$\Pr[\chi=a]\leq O(S^{-1/4}) = O(N^{-1/4})$. Therefore, for $i<\sqrt{N}/4,~\Pr[{\cal{E}}_i] \leq O(k\cdot N^{-1/4}) = O(N^{-1/20})$ for $k\leq N^{1/5}$. 
Let ${\cal{E}}$ be the event that for all $i\in[\sqrt{N}/4]$, block $B_i$ is not \kub. 
\begin{align*}
\mathcal{E} &= \mathcal{E}_1 \cap \mathcal{E}_2 \cap \cdots \cap \mathcal{E}_{\sqrt{N}/4} \\
\Pr[\mathcal{E}] &= \Pr[\mathcal{E}_1 \cap \mathcal{E}_2 \cap \cdots \cap \mathcal{E}_{\sqrt{N}/4}] \\
&= \Pr[\mathcal{E}_1]\cdot \prod\limits_{i=2}^{\sqrt{N}/4}\Pr[\mathcal{E}_i\mid \cap_{j=1}^{i-1}\mathcal{E}_{j}] \\
& \leq O(2^{-\sqrt{N}\log N/80})
\end{align*}
Note $\bar{{\cal{E}}}$ is the event that there exists an $i\in \sqrt{N}/4$ such that $B_i$ is \kub. $\Pr[\bar{{\cal{E}}}] =1- \Pr[{\cal{E}}]\geq 1- \frac{1}{2^{-\frac{1}{80}\sqrt{N}\log N}}$. \qed
}
 
\begin{corollary}
\label{cor:rank1}
Let $P$ be an ROABP and $\Phi_P$ be the multilinear formula obtained from $P$ computing $f$. Let $\varphi\sim{\cal D}$. Then with probability  $1-2^{-\Omega(\sqrt{N}\log N)}$, $\rank(M_{f^\varphi})\leq |\Phi|\cdot 2^{N/2-N^{1/5}}$.
\end{corollary}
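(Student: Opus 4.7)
The plan is to combine Lemma~\ref{lem:kbl} with Lemma~\ref{lem:rank} applied at the root of $\Phi$. The result is essentially immediate once one matches the parameter $k$ to the target bound $N^{1/5}$ in the exponent. No new structural or probabilistic ingredient is required.

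First, I would invoke Lemma~\ref{lem:kbl} with parameter $k$ chosen proportional to $N^{1/5}$ (specifically, $k = 2N^{1/5}$, since Lemma~\ref{lem:rank} will contribute the factor $2^{|X_v|/2 - k/2}$). This yields an event $\mathcal{E}$ that occurs with probability at least $1 - 2^{-\Omega(\sqrt{N}\log N)}$ over $\varphi \sim {\cal D}$: on $\mathcal{E}$ there is some block $B \in \{B_1, \ldots, B_t\}$ that is $k$-unbalanced with respect to $\varphi$, and by Observation~\ref{obs:hitting} the root of $\Phi$ is \kbl.

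Conditioned on $\mathcal{E}$, I apply Lemma~\ref{lem:rank} directly at the root $r$ of $\Phi$, noting that $r$ is a $+$ gate by Lemma~\ref{lem:abptoformula}(5), $X_r = X$ has size $N$, and $f_r = f$. The lemma then gives
\[
\rank(M_{f^\varphi}) \;=\; \rank(M_{f_r^\varphi}) \;\leq\; |\Phi_r|\cdot 2^{|X_r|/2 - k/2} \;\leq\; |\Phi|\cdot 2^{N/2 - N^{1/5}},
\]
which is exactly the claimed bound. The only subtlety worth checking is that Lemma~\ref{lem:kbl} tolerates $k$ being a small constant multiple of $N^{1/5}$ rather than being bounded above by $N^{1/5}$; but inspection of its proof shows only the hidden constant in $\Pr[\mathcal{E}_i] \leq O(k \cdot N^{-1/4})$ changes, leaving the final probability bound $1 - 2^{-\Omega(\sqrt{N}\log N)}$ intact. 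There is no genuine obstacle here — all the work has already been done in Lemmas~\ref{lem:kbl} and~\ref{lem:rank}, and the corollary is their straightforward composition at the root.
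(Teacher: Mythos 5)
Your proof is correct and follows exactly the paper's route: combine Lemma~\ref{lem:kbl} (existence of a $k$-unbalanced block with high probability, hence $\Phi$ is \kbl\ by Observation~\ref{obs:hitting}) with Lemma~\ref{lem:rank} applied at the root. You also correctly flag and patch a small constant-factor mismatch that the paper itself glosses over: Lemma~\ref{lem:rank} yields exponent $N/2 - k/2$ while Lemma~\ref{lem:kbl} is stated for $k \le N^{1/5}$, so taking $k = N^{1/5}$ literally would only give $2^{N/2 - N^{1/5}/2}$; your choice $k = 2N^{1/5}$, justified by re-examining the bound $\Pr[\mathcal{E}_i] \le O(k\cdot N^{-1/4})$ inside Lemma~\ref{lem:kbl}'s proof, is the right fix and preserves the $1 - 2^{-\Omega(\sqrt{N}\log N)}$ probability.
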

\begin{proof}
Follows directly from Lemmas \ref{lem:rank} and \ref{lem:kbl}. \qed
\end{proof}
 
We are ready to combine the above to prove Theorem~\ref{thm:lb-roabp}:
\begin{proof}[of Theorem~\ref{thm:lb-roabp}]
Suppose, $f_i$ has an   ROABP $P_i$ of size $s_i$. Then, by Lemma~\ref{lem:abptoformula}, there is a multilinear formula $\Phi_i$ computing $f_i$. By Lemma~\ref{lem:kbl},  probability that $\Phi_i$ is not \kbl\ is at most $2^{-\Omega(\sqrt{N}\log N)}$. Therefore, if $m < 2^{cN^{1/5}}$, there is a partition $\varphi\sim{\cal D}$ such that $\Phi_i$ is \kbl\ for every $1\leq i\le m$.  Therefore, by Lemma~\ref{lem:rank}, there is a partition $\varphi\sim{\cal D}$ such that $\rank(M_{g^\varphi}) \le m\cdot s^{O(\log N)}\cdot 2^{N/2 - k}$.  If $m < 2^{c(N^{1/5})}/s^{\log N}$, we have $\rank(M_{g^\varphi}) < 2^{N/2}$, a contradiction to Lemma~\ref{lem:ry}. \qed
\end{proof}

\subsection{Lower Bound against  multilinear $r$-pass   ABPs}
\label{subsec:rpass}
In this section, we extend Theorem~\ref{thm:lb-roabp} to the case of $r$-pass    ABPs. 
Let $P$ be a multilinear $r$-pass    ABP of size $s$ having $\ell$ layers. Let $\pi_1,\pi_2,\ldots,\pi_r$ be the $r$ orders associated with the $r$-pass ABP. Lemmas~\ref{lem:rpasstoformula} and Lemma~\ref{lem:rpass} show that techniques in Section~\ref{subsec:sumroabp} can be adapted to the case of $r$-pass sm-ABPs. Proofs are deferred to the appendix.
\begin{lemma}
\label{lem:rpasstoformula}
Let $P$ be a multilinear $r$-pass ABP of size $s$ having $\ell$ layers computing a polynomial $f\in\mathbb{F}[x_1,\ldots,x_N]$. Then there exists a syntactic multilinear formula 
$\Psi_P=\Psi_1 +\Psi_2 + \cdots + \Psi_t, t= s^{O(r)}$ where each $\Psi_i$ is a syntactic multilinear formula obtained from an ROABP.
\end{lemma}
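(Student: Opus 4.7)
The plan is to decompose $P$ at its $r-1$ pass boundaries and exploit syntactic multilinearity of $P$ to collapse each restricted sub-program into a single ROABP. Let $L^{(i)}$ denote the layer of $P$ between pass $i$ and pass $i+1$; since $P$ has size $s$, $L^{(i)}$ contains at most $s$ nodes. For every tuple $(u_1, \ldots, u_{r-1})$ with $u_i \in L^{(i)}$, and with the convention $u_0 = s$, $u_r = t$, restricting to $s$-to-$t$ paths through the tuple contributes $\prod_{i=1}^{r} [u_{i-1}, u_i]$ to $f$, yielding
\[
f \;=\; \sum_{(u_1, \ldots, u_{r-1})} \prod_{i=1}^{r} [u_{i-1}, u_i],
\]
a sum of at most $s^{r-1} = s^{O(r)}$ terms. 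It then suffices to realize each product as an ROABP of size $O(s)$ and apply Lemma~\ref{lem:abptoformula}.

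The crux is a disjointness claim: for each fixed tuple, the sets $V_i := \var([u_{i-1}, u_i])$ for $i = 1, \ldots, r$ are pairwise disjoint. If instead $x_k \in V_i \cap V_j$ with $i \neq j$, there are path segments $\gamma_i$ from $u_{i-1}$ to $u_i$ and $\gamma_j$ from $u_{j-1}$ to $u_j$ both carrying $x_k$ as an edge label; picking any segments $\gamma_\ell$ in the remaining passes and splicing $\gamma_1 \gamma_2 \cdots \gamma_r$ produces an $s$-to-$t$ path of $P$ on which $x_k$ appears twice, contradicting syntactic multilinearity.

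Given disjointness, I would rewire the restricted sub-program as follows: for each $x_k$ lying in the unique $V_{i^*}$, retain its variable-edges only within pass $i^*$ and replace the variable-edges for $x_k$ in every other pass by the constant $0$; variables outside $\bigcup_i V_i$ are zeroed out entirely. Since $[u_{j-1}, u_j]$ does not depend on variables outside $V_j$, the product polynomial is unchanged, while each variable now labels edges in at most one layer across all $r$ passes. Hence the rewired sub-program is a genuine ROABP of size at most $s$, and Lemma~\ref{lem:abptoformula} converts it into a syntactic multilinear formula of size $s^{O(\log N)}$. The whole $\Psi_P$ is then the sum over all $s^{O(r)}$ tuples.

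The main obstacle is the disjointness step. Syntactic multilinearity is a purely syntactic guarantee about individual $s$-to-$t$ paths, and a priori nothing prevents $V_i$ and $V_j$ from sharing variables. The key insight is that the segments $\gamma_1, \ldots, \gamma_r$ through a fixed tuple can be chosen independently of one another yet always splice into a legitimate $s$-to-$t$ path of $P$; this promotes the per-path multilinearity condition to a global disjointness statement on the $V_i$'s, which in turn lets the $r$ passes collapse into a single ROABP. Without this observation the product of $r$ pass-ROABPs need not even be multilinear, let alone reducible to a single ROABP.
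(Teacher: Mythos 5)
Your proof is correct and follows essentially the same route as the paper: decompose $P$ at pass boundaries, write $f$ as a sum over boundary-node tuples of products of per-pass restricted subprograms, argue each product collapses to a single ROABP by syntactic multilinearity, and then apply Lemma~\ref{lem:abptoformula} to each. The paper simply asserts that each product term is an ROABP of size at most $s$; your splicing argument for the pairwise disjointness of the $V_i$ (which is what makes that assertion true) supplies the justification the paper leaves implicit, and the rewiring step is harmless since, once disjointness holds, the induced subgraph from $s$ to $t$ through the fixed tuple is already an ROABP.
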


\makeproof{lem:rpasstoformula}{
Let $P$ be a multilinear $r$-pass ABP of size $s$ computing a polynomial $f\in\mathbb{F}[x_1,\ldots,x_N]$. Then, there exists $i_1,i_2,\ldots,i_{r+1}\in [\ell]$ be such that for $j\in[r]$, the subprogram $[u,v]$ is an   ROABP for any nodes $u$ and $v$ in layers $L_{i_j}$ and $L_{i_{j+1}}$ respectively. The polynomial $f$ computed by $P$ can be expressed as 
\begin{equation}
\label{eqn:k-pass}
f=\sum\limits_{\bar{u}}\prod\limits_{i=1}^r [u_{i},u_{i+1}]
\end{equation}
where the summation is over $\bar{u}=(u_1,u_2,\ldots,u_r)$ where $u_1,u_2,\ldots,u_r$ are nodes in layers $L_{i_1},L_{i_2},\ldots,L_{i_r}$ respectively. As $P$ is a syntactic multilinear ABP, the product term  $\prod_{i=1}^r [u_{i},u_{i+1}]$ in Equation (\ref{eqn:k-pass}) is an ROABP of size atmost $s$ and has a syntactic multinear formula $\Psi_{\bar{u}}$ of size $s^{O(\log N)}$. Thus, $\Psi_P=\Psi_1 +\Psi_2 + \cdots + \Psi_t, t= s^{O(r)}$ where each $\Psi_i$ is a syntactic multilinear formula obtained from an ROABP.  The formula $\Psi$ computing $f$ has size $rs^{O(r+\log N)}$. \qed 
}

\begin{lemma}
\label{lem:rpass}
Let $P$ be a multilinear $r$-pass ABP computing a polynomial $f\in\mathbb{F}[x_1,\ldots,x_N]$ and $\Psi_P=\Psi_1+\Psi_2+\cdots+\Psi_t,~ t=s^{O(r)}$ be the syntactic multilinear formula computing $f$. Let $\varphi\sim{\cal D}$ and $k\leq N^{1/5}$. Then with probability  $1-2^{-\Omega(\sqrt{N}\log N)}$, $\rank(M_{f})\leq |\Psi|\cdot 2^{N/2-k/2}$. 
\end{lemma}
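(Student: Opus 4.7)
The plan is to reduce to the ROABP analysis of Section~\ref{subsec:sumroabp} by handling each summand $\Psi_i$ of the decomposition $\Psi_P = \Psi_1 + \cdots + \Psi_t$ (with $t = s^{O(r)}$, from Lemma~\ref{lem:rpasstoformula}) separately, and then combining via sub-additivity. The first step is to verify that each $\Psi_i$ is genuinely obtained from an ROABP: for a fixed tuple $\bar u = (u_1,\ldots,u_r)$, each sub-program $[u_j, u_{j+1}]$ is an ROABP in order $\pi_j$, and syntactic multilinearity of $P$ forces every variable to appear in at most one of them, so their concatenation is itself an ROABP (in some order built from fragments of $\pi_1,\ldots,\pi_r$). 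This means the block/hitting framework of Section~\ref{subsec:sumroabp} applies to each $\Psi_i$ separately, each with its own collection of $\Theta(\sqrt N)$ blocks.

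Second, I would apply Lemma~\ref{lem:kbl} per summand: for $k \le N^{1/5}$ and $\varphi \sim {\cal D}$, each $\Psi_i$ is \kbl\ with respect to some block of its underlying ROABP with probability at least $1 - 2^{-\Omega(\sqrt N \log N)}$. A union bound over all $t = s^{O(r)}$ summands yields that, with probability at least $1 - t\cdot 2^{-\Omega(\sqrt N \log N)}$, every $\Psi_i$ is simultaneously \kbl. In the regime where $r \log s = o(\sqrt N \log N)$ (which accommodates the target parameters of Theorem~\ref{thm:lb-rpass}), this still survives as $1 - 2^{-\Omega(\sqrt N \log N)}$. Conditional on this event, Lemma~\ref{lem:rank} gives $\rank(M_{f_i^\varphi}) \le |\Psi_i|\cdot 2^{N/2 - k/2}$ for every $i$, where $f_i$ denotes the polynomial computed by $\Psi_i$.

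Finally, sub-additivity (Lemma~\ref{lem:sub-aditivity}) combines the per-summand bounds into
\[
\rank(M_{f^\varphi}) \le \sum_{i=1}^{t} \rank(M_{f_i^\varphi}) \le t \cdot \max_i |\Psi_i| \cdot 2^{N/2 - k/2} \le |\Psi|\cdot 2^{N/2 - k/2},
\]
as required. The main obstacle I anticipate is the first step: cleanly showing that the $\bar u$-indexed concatenation produces a genuine ROABP so that the Section~\ref{subsec:sumroabp} machinery transfers verbatim, and, in parallel, verifying that the union bound over $s^{O(r)}$ summands stays within the parameter regime where the stated failure probability survives. Everything else is a mechanical application of results already proved in Section~\ref{subsec:sumroabp}.
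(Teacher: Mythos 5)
Your proof is correct and follows the same route as the paper's: decompose $f = \sum_{i=1}^t f_i$ via Lemma~\ref{lem:rpasstoformula}, bound $\rank(M_{f_i^\varphi})$ for each summand using the ROABP machinery of Section~\ref{subsec:sumroabp} (Lemmas~\ref{lem:kbl} and~\ref{lem:rank}, i.e.\ Corollary~\ref{cor:rank1}), and combine by sub-additivity. You are in fact more careful than the paper, which leaves the union bound over the $t = s^{O(r)}$ summands (and the resulting implicit regime $r\log s = o(\sqrt N \log N)$) unstated, and you explicitly justify that each fixed-$\bar u$ concatenation is a genuine ROABP via syntactic multilinearity, which the paper simply asserts.
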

\makeproof{lem:rpass}{
Let $P$ be a multilinear $r$-pass ABP computing a polynomial $f\in\mathbb{F}[x_1,\ldots,x_N]$. By Lemma \ref{lem:rpasstoformula}, $\Psi_P=\Psi_1+\Psi_2+\cdots+\Psi_t,~ t=s^{O(r)}$ be the syntactic multilinear formula computing $f$. Note that $\Psi_i$ is a multilinear formula obtained from an ROABP computing a polynomial $f_i$. By Corollary \ref{cor:rank}, with probability $1-2^{-\Omega(\sqrt{N}\log  N)}$, we have $\rank(M_{f_i^\varphi})\leq |\Psi_i|2^{|X|/2-k/2}$ for $k\leq N^{1/5}$. By sub-additivity in Lemma\ref{lem:sub-aditivity}, $\rank(M_{f^\varphi})\leq \rank(M_{f_1^\varphi})+\ldots+\rank(M_{f_t^\varphi})\leq (|\Psi_1|+|\Psi_2|+\cdots +|\Psi_t|)2^{|X|/2-k/2}\leq |\Psi|\cdot 2^{N/2-k/2}$. \qed
}

Combining the above Lemmas with Lemma~\ref{lem:ry} we get: 
\begin{theorem}
\label{thm:lb-rpass}
Let $f_1,\ldots f_m$ be polynomials computed by multilinear $r$-pass ABPs of size $s_1,s_2,\ldots,s_m$ respectively such that $g= f_1+\cdots + f_m$. Then, $m = \frac{2^{\Omega(N^{1/5})}}{s^{c(r +\log N)}}$, where $c$ is a constant and $s=\max\{s_1,s_2,\ldots,s_m\}$.
\end{theorem}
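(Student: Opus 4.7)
The plan is to mirror the proof of Theorem~\ref{thm:lb-roabp}, with the role of Corollary~\ref{cor:rank1} played by Lemma~\ref{lem:rpass}. Suppose for contradiction that $g = f_1 + \cdots + f_m$, where each $f_i$ is computed by a multilinear $r$-pass ABP $P_i$ of size $s_i \le s$. First, apply Lemma~\ref{lem:rpasstoformula} to each $P_i$ to obtain a syntactic multilinear formula $\Psi^{(i)} = \Psi^{(i)}_1 + \cdots + \Psi^{(i)}_{t_i}$ with $t_i = s_i^{O(r)}$ summands, each $\Psi^{(i)}_j$ arising from an ROABP of size at most $s$, so the total size is bounded by $s^{O(r+\log N)}$.

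Next, I would use Lemma~\ref{lem:rpass} applied to each $P_i$ with the parameter $k = N^{1/5}$: the set $\mathcal{B}_i$ of partitions $\varphi \in \mathcal{D}$ for which the rank bound $\rank(M_{f_i^\varphi}) \le |\Psi^{(i)}|\cdot 2^{N/2 - k/2}$ fails has measure at most $2^{-\Omega(\sqrt{N}\log N)}$. A union bound over $i = 1,\ldots,m$ shows that, as long as $m \le 2^{\Omega(\sqrt{N}\log N)}$ (which is implied by the regime we are trying to rule out, since $N^{1/5} \ll \sqrt{N}\log N$), there exists a partition $\varphi^{*} \in \mathcal{D}$ lying outside every $\mathcal{B}_i$ simultaneously.

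For such a $\varphi^{*}$, the sub-additivity part of Lemma~\ref{lem:sub-aditivity} yields
\begin{equation*}
\rank(M_{g^{\varphi^{*}}}) \;\le\; \sum_{i=1}^m \rank(M_{f_i^{\varphi^{*}}}) \;\le\; m \cdot s^{O(r+\log N)} \cdot 2^{N/2 - N^{1/5}/2}.
\end{equation*}
On the other hand, Lemma~\ref{lem:ry} guarantees that $\rank(M_{g^{\varphi^{*}}}) = 2^{N/2}$ for every partition $\varphi^{*} \in \mathcal{D}$. Combining these two bounds and rearranging forces
\[
m \;\ge\; \frac{2^{\Omega(N^{1/5})}}{s^{c(r+\log N)}}
\]
for an appropriate absolute constant $c$, as claimed.

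The proof is essentially a mechanical repackaging of the sum-of-ROABPs argument, so I do not expect a genuine obstacle. The one place where some care is needed is the consistency of the two growth rates in the union bound: one must check that the target lower bound $2^{\Omega(N^{1/5})}/s^{c(r+\log N)}$ is consistent with the union-bound threshold $2^{\Omega(\sqrt{N}\log N)}$, which is automatic because $N^{1/5} = o(\sqrt{N}\log N)$ and nothing in the argument requires $r$ to be tiny beyond what is already absorbed into the $s^{O(r+\log N)}$ formula-size blow-up from Lemma~\ref{lem:rpasstoformula}.
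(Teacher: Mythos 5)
Your proposal is correct and follows essentially the same route as the paper's own proof: convert each $r$-pass ABP to a sum of ROABP-derived formulas via Lemma~\ref{lem:rpasstoformula}, apply the rank bound of Lemma~\ref{lem:rpass}, union-bound over the $m$ summands to find a common good partition, and contradict Lemma~\ref{lem:ry} via sub-additivity. You spell out the union-bound step slightly more explicitly than the paper does, but the argument is otherwise identical.
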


\makeproof{thm:lb-rpass}{
Suppose, $f_j$ has a multilinear $r$-pass ROABP $P$ of size $s$. Then, by Lemma~\ref{lem:rpasstoformula}, there is a multilinear formula $\Psi_j = \Psi_{j_1} + \dots + \Psi_{j,t}$ computing $f_j$ such that $t \le s^{O(r)}$ and each $\Psi_{j_i}$ is a syntactic  multilinear formula of size $s^{O(\log N)}$ obtained from an ROABP of size at most $s$. By Lemma~\ref{lem:rpass}, $\rank(M_{f_j})\leq t \max_{i}\{|\Psi_{j_i}|\}\cdot  2^{N/2-k/2} \le  s^{O(r+\log N)} 2^{N/2 - k/2}$
with probability atleast $1-2^{-\Omega(\sqrt{N})}$. Therefore, if $s< 2^{o(\sqrt{N})}$,  there is a partition $\varphi\sim{\cal D}$ such that $\rank(M_{g^\varphi}) \le m\cdot s^{O(r+\log N)}\cdot 2^{N/2 - k}$.  If $m < 2^{c(N^{1/5})}/s^{O(r+\log N)}$, we have $\rank(M_{g^\varphi}) < 2^{N/2}$, a contradiction to Lemma~\ref{lem:ry}. \qed
}

\subsection{Lower Bound against sum of $\alpha$-sparse ROABPs}

\label{subsec:sparse}

In this section we prove lower bounds against sum of $\alpha$-sparse ROABPs for $\alpha> 1/10$.
We begin with a version of Lemma~\ref{lem:abptoformula} for sparse ROABPs. 

\begin{lemma}
\label{lem:sparsetoformula}
Let $\alpha \ge 1/10$ and  $P$ be an $\alpha$-sparse ROABP of size $s$ computing a polynomial $f\in\mathbb{F}[x_1,\ldots,x_N]$. Then $f$ can be computed by a syntactic multilinear formula $\Phi$ of size $s^{O(\log d)}$ and depth $O(\log d)$ such that the leaves are labelled with sparse polynomials in $X_i$ for some $1\le i \le d$, where $d = \Theta(N^{\alpha})$.  
\end{lemma}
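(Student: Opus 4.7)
The plan is to mirror the divide-and-conquer construction of Lemma~\ref{lem:abptoformula}, but to split along \emph{layer boundaries} of the $\alpha$-sparse ROABP rather than along arbitrary variable partitions. This is natural here because the variable set $X$ already comes partitioned into $d=\Theta(N^{\alpha})$ blocks $X_1,\ldots,X_d$, one per edge-layer, and it is only the number of layers (not the number of variables) that should drive the recursion depth.

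Concretely, let $P$ be an $\alpha$-sparse ROABP with node-layers $L_0,L_1,\ldots,L_d$, start node $s$, terminal node $t$, and at most $s$ nodes per layer. Fix the middle node-layer $L_{\lceil d/2\rceil}$ with nodes $u_1,\ldots,u_k$ ($k\le s$) and write
\[
f \;=\; \sum_{j=1}^{k}\,[s,u_j]\times [u_j,t].
\]
Each factor $[s,u_j]$, $[u_j,t]$ is again an $\alpha$-sparse ROABP, on the layer ranges $L_0$--$L_{\lceil d/2\rceil}$ and $L_{\lceil d/2\rceil}$--$L_d$ respectively, whose edge labels are sparse polynomials in a contiguous subset of the blocks $\{X_i\}$. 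Recursing on each, we obtain syntactic multilinear formulas $\phi_j,\psi_j$ and set $\Phi=\sum_j \phi_j\times\psi_j$. The base case is a sub-ABP with a single edge-layer between two consecutive node-layers: for nodes $u\in L_{i}$ and $v\in L_{i+1}$, the polynomial $[u,v]$ is simply the label of the edge $(u,v)$, an $s$-sparse multilinear polynomial in $X_{i+1}$, and we place it directly at a leaf.

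The bookkeeping is routine. If $T(\ell)$ denotes the size of the resulting formula for a sub-ABP with $\ell$ edge-layers, then $T(\ell)\le 2s\cdot T(\lceil\ell/2\rceil)+O(s)$ with $T(1)=1$, yielding $T(d)=s^{O(\log d)}$; similarly $D(\ell)=D(\lceil\ell/2\rceil)+O(1)$ gives depth $O(\log d)$. Syntactic multilinearity is automatic: because $P$ is an ROABP and every cut is at a node-layer, the two children of any product gate in $\Phi$ depend on disjoint unions of the blocks $X_i$. I do not anticipate a real obstacle: the only point to verify is that sub-ABPs obtained by restricting to a layer range inherit the $\alpha$-sparse ROABP structure, which is immediate since layer-wise edge labels are unchanged and the block partition simply restricts. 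The crucial payoff is that we never split inside a single block $X_i$; the sparse label on each edge is kept intact as a leaf, which is exactly what makes the size and depth bounds scale in $\log d$ instead of $\log N$.
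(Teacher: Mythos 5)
Your proposal is correct and follows essentially the same divide-and-conquer approach as the paper: cut at the middle node-layer, write $f = \sum_j [s,u_j]\times[u_j,t]$, recurse on each subprogram, and use the single-edge-layer case as the base case with a sparse polynomial at the leaf. The only difference is that you spell out the size/depth recurrences and the base case more explicitly than the paper does, but the construction and bounds are identical.
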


\makeproof{lem:sparsetoformula}{
The proof is similar to Lemma \ref{lem:abptoformula}. Let $P$ be $\alpha$-sparse ROABP with $d+1$ layers $L_0,L_1,\ldots,L_{d}$ with $s$  and  $t$ as the start and terminal nodes respectively. Let $i=\lfloor{d/2}\rfloor$ and $u_{i_1},u_{i_2},\ldots,u_{i_w} (w\leq s)$ be the nodes at the layer $L_i$. Then,
\begin{equation}
\label{eqn:formula}
f = \sum\limits_{j=1}^{w} [s,u_{i_j}]\cdot[u_{i_j},t]
\end{equation}
where $[u,v]$ is the polynomial computed by the subprogram with start node $u$ and terminal node $v$. By induction on $i$,  Let $\phi_j$ (respectively $\psi_j$) be the formula computing $[s,u_{i_j}]$ (respectively $[u_{i_j},t]$ ). Then $\Phi = \sum_{j=1}^w \phi_j\times \psi_j.$ By induction, it follows  that the resulting formula $\Phi$ has size $s^{O(\log d)}$, depth $O(\log d)$ and is syntactic multilinear. As edge labels in $P$ are sparse polynomials, leaves of $\Phi$ are labeled by sparse polynomials in $\mathbb{F}[X_i]$ for some $ i \in  [d]$. \qed
}

\begin{lemma}
\label{lem:sparse-ub}
Let $P$ be  an $\alpha$-sparse ROABP computing  $f\in\mathbb{F}[x_1,\ldots,x_N]$ and $\Phi$ be the syntactic multilinear formula computing $f$. Let $\varphi\sim{\cal D}$. Then, for any $k \leq N^{(1-\alpha)/4}$, there exists an $i\in [d]$ such that $X_i$ is \kub\ with probability atleast $1- 2^{\Omega(-{N^{1/10}}\log N/16)}$.
\end{lemma}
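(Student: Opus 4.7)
The plan is to mirror the argument of Lemma~\ref{lem:kbl}, with the sparse blocks $X_1, \ldots, X_d$ of the $\alpha$-sparse ROABP (each of size $|X_i| = \Theta(N^{1-\alpha})$) playing the role of the intervals $B_1, \ldots, B_t$ there. Let $\mathcal{E}_i$ be the event that $X_i$ is not $k$-unbalanced under $\varphi$, i.e.\ $\bigl||Y \cap \varphi(X_i)| - |Z \cap \varphi(X_i)|\bigr| \le k$. The goal is to bound $\Pr\bigl[\bigcap_{i=1}^{d/4} \mathcal{E}_i\bigr]$; the complement is exactly the event asserted by the lemma.

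The heart of the argument is a per-block hypergeometric estimate. Fix $i \le d/4$. Conditioned on $\varphi(X_1), \ldots, \varphi(X_{i-1})$, the random variable $\chi_i = |Y \cap \varphi(X_i)|$ is distributed as $\mathcal{H}(S_i, M_{1,i}, M_{2,i})$ where $S_i = N - \sum_{j<i} |X_j|$, $M_{1,i} = |X_i| = \Theta(N^{1-\alpha})$, and $M_{2,i} = N/2 - \sum_{j<i} |Y \cap \varphi(X_j)|$. For $i \le d/4$ we have $S_i \in [3N/4, N]$ and $M_{2,i} \in [N/4, 3N/4]$; moreover, since $\alpha \le 1/2$, $\sqrt{S_i} \le \sqrt{N} \le M_{1,i} \le S_i/2$ for large $N$. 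Hence Proposition~\ref{prop:hyper}(1) yields $\Pr[\chi_i = a] = O(N^{-1/4})$ for every $a$. Summing over the $2k+1$ values of $\chi_i$ compatible with $\mathcal{E}_i$ gives
\[
\Pr\bigl[\mathcal{E}_i \mid \mathcal{E}_1 \cap \cdots \cap \mathcal{E}_{i-1}\bigr] \le O(k \cdot N^{-1/4}) \le O(N^{-\alpha/4}),
\]
where the last inequality uses $k \le N^{(1-\alpha)/4}$.

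Multiplying conditional probabilities across the first $d/4 = \Theta(N^\alpha)$ blocks,
\[
\Pr\Bigl[\bigcap_{i=1}^{d/4} \mathcal{E}_i \Bigr] \le \bigl(O(N^{-\alpha/4})\bigr)^{\Theta(N^\alpha)} = 2^{-\Omega(N^\alpha \log N)} \le 2^{-\Omega(N^{1/10} \log N)},
\]
the last step using $\alpha \ge 1/10$. The main technical point is to verify the hypotheses of Proposition~\ref{prop:hyper}(1) \emph{uniformly} across all $i \le d/4$ and under all conditionings on earlier blocks — restricting to $i \le d/4$ is exactly what keeps $S_i$ and $M_{2,i}$ in the required ranges, and the inequality $M_{1,i} \ge \sqrt{S_i}$ is what pins $\alpha$ to the regime $[1/10, 1/2]$ and forces the use of case (1) rather than case (2) of the hypergeometric proposition. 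Beyond this bookkeeping, no new ideas beyond those in Lemma~\ref{lem:kbl} are required; the sparse parameters $|X_i| = N^{1-\alpha}$ and $d = \Theta(N^\alpha)$ simply replace the $\Theta(\sqrt{N})$-sized blocks of the ROABP case.
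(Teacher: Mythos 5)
Your proposal is correct and follows the same template as the paper's own proof (and as Lemma~\ref{lem:kbl}): treat the $X_i$ as blocks, bound the per-block balance probability via the hypergeometric proposition conditioned on earlier blocks, and multiply across the first $\Theta(N^\alpha)$ blocks to get the exponentially small failure probability.

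The one genuine divergence is which case of Proposition~\ref{prop:hyper} you invoke. You use case (1), which gives $\Pr[\chi_i = a] = O(S^{-1/4}) = O(N^{-1/4})$ per value and hence $\Pr[\mathcal{E}_i] \le O(kN^{-1/4}) = O(N^{-\alpha/4})$. The paper instead uses case (2): since $M_{1,i} = \Theta(N^{1-\alpha})$, case (2) gives the sharper per-value bound $O(M_1^{-1/2}) = O(N^{-(1-\alpha)/2})$, hence $\Pr[\mathcal{E}_i] \le O(N^{-(1-\alpha)/4})$, which is strictly smaller than your $O(N^{-\alpha/4})$ throughout the range $\alpha < 1/2$. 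Both bounds are well below $1$ when $\alpha \ge 1/10$ and both propagate to $2^{-\Omega(N^{1/10}\log N)}$ after multiplying over $\Theta(N^\alpha)$ blocks, so the final statement is unaffected; the paper's choice just yields a larger constant in the exponent. Your parenthetical claim that $M_{1,i}\ge\sqrt{S_i}$ ``forces the use of case (1) rather than case (2)'' is not right: case (2)'s hypothesis $0 \le M_1 \le 2S/3$ is weaker and holds whenever case (1)'s does, so case (2) is always an option here (and is in fact the better one). What the condition $M_{1,i}\ge\sqrt{S_i}$ actually does is make case (1) \emph{admissible} when $\alpha\le 1/2$ — it does not rule out case (2). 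So the regime restriction $\alpha \le 1/2$ is an artifact of your route, not of the lemma; with case (2), as the paper does, the lower endpoint $\alpha > 1/10$ is the only binding constraint.

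Aside from that mischaracterization of necessity, the argument is sound and the bookkeeping (ranges of $S_i$, $M_{2,i}$, conditioning on earlier blocks, restriction to $i\le d/4$) matches the paper.
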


%
%
%

\makeproof{lem:sparse-ub}{
Let $P$ be an  an $\alpha$-sparse ROABP and $X= X_1 \uplus X_2\uplus\ldots\uplus X_d$. Note that for any $\ell\in[d],~|X_\ell|=\Theta(N^{1-\alpha})$. Let $\varphi\sim {\cal D}$. Let ${\cal{E}}_i$ be the event that the set $X_i$ is not \kub ~ with respect to $\varphi$. For any set $\ell\in[d]$, denote $Y_\ell =\varphi(X_\ell)\cap Y$. Let $\chi =|Y_\ell|$ be a  random variable.  Observe that $\chi$ has the distribution ${\cal{H}}(S,M_1,M_2)$ with\begin{align*}
S &= N-(|X_1|+\cdots+|X_{\ell-1}|)= N-(\ell-1)cN^{1-\alpha} \\
M_1 &= |X_\ell| \\
M_2 &= N/2-(|Y_1|+\cdots+|Y_{\ell-1}|)= N/2 - [(\ell-1)cN^{1-\alpha}]
\end{align*}

For $(\ell-1)<N^\alpha/4c$ where $c$ is the constant hidden in $\Theta$ notation, we have :
\begin{itemize}
\item[(i)] $3N/4\leq S \leq N$; and
\item[(ii)] $0\leq  M_1 = \Theta(N^{1-\alpha})\leq  2N/3 $ when $\alpha\geq 1/10$ for large enough $N$.
\item[(iii)]$S/4\leq N/4\leq M_2 \leq N/2\leq 2S/3 \leq 3S/4 $.
\end{itemize}
By Proposition \ref{prop:hyper}(2), for  $i< {N^\alpha/4c}$, $\alpha> 1/10$ and $k\leq N^{(1-\alpha)/4}$,  $\Pr[{\cal{E}}_i]  \leq  O(N^{-(1-\alpha)/4})$. Let ${\cal{E}}$ be the event that for all $i\in[N^\alpha/4c]$, set $X_i$ is not \kub.
 \begin{align*}
\mathcal{E} &= \mathcal{E}_1 \cap \mathcal{E}_2 \cap \cdots \cap \mathcal{E}_{{N^\alpha}/4c} \\
\Pr[\mathcal{E}] &= \Pr[\mathcal{E}_1 \cap \mathcal{E}_2 \cap \cdots \cap \mathcal{E}_{{N^\alpha}/4c}] \\
&= \Pr[\mathcal{E}_1]\cdot \prod\limits_{i=2}^{{N^\alpha}/4c}\Pr[\mathcal{E}_i\mid \cap_{j=1}^{i-1}\mathcal{E}_{j}] \\
& \leq O(2^{-{N^{1/10}}\log N/16})
\end{align*}
Note $\bar{{\cal{E}}}$ is the event that there exists an $i\in [{N^\alpha}/4c]$ such that $X_i$ is \kub. $\Pr[\bar{{\cal{E}}}] =1- \Pr[{\cal{E}}]\geq 1- 2^{\Omega(-{N^{1/10}}\log N/16)}$.\qed
}

Our first observation is that we can treat the variables sets $X_1,\ldots, X_d$ as blocks $B_1,\ldots, B_d$ as in Section~\ref{subsec:sumroabp}:

\begin{obs}
If $X_r$ is \kub~, then $\Phi$ is \kbl~ for $B=X_r$. 
\end{obs}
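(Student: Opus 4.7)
The plan is to imitate the proof of Observation~\ref{obs:hitting}, adapted to the block structure induced by an $\alpha$-sparse ROABP. The key adaptation is that the ``atomic'' units are no longer individual variables but the sparsity blocks $X_1,\ldots,X_d$, and the relevant product-height threshold becomes $(\log d)/2$ rather than $(\log N)/2$. Correspondingly, the definition of \kbl{} is reinterpreted with these blocks and this threshold, so that a block $B=X_r$ plays the same role as the intervals $B_1,\ldots,B_t$ in the ROABP case.

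First, I would set up the interval structure of $\Phi$. By the divide-and-conquer construction of Lemma~\ref{lem:sparsetoformula}, every split occurs at some layer $L_i$ of the sparse ROABP, and so for each gate $v$ of $\Phi$ the variable set $X_v$ is a union of contiguous blocks $X_{i_1},X_{i_1+1},\ldots,X_{i_2}$. Writing $I_v\subseteq[d]$ for the corresponding interval of block indices, this gives a block-analogue of the intervals used in Section~\ref{subsec:sumroabp}. In particular, at product-height exactly $(\log d)/2$ every gate has $|I_v|=1$, so some gate $g$ at this level satisfies $X_g=X_r$, i.e.\ $I_g=\{r\}$.

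Next, the analogues of Observation~\ref{obs:blocks} and Observation~\ref{obs:not-kbl} transfer verbatim: for any gate $v$ at product-height $\le (\log d)/2$, the interval $I_v$ either contains $\{r\}$ or is disjoint from it; and if a $+$ gate (respectively $\times$ gate) fails to be \kbl{} for $B=X_r$, then every child (respectively both children) must have $I_{v_j}\cap\{r\}=\emptyset$. Both facts are immediate from the interval structure together with the alternating $+/\times$ layer layout guaranteed by Lemma~\ref{lem:sparsetoformula}.

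Finally, the observation itself follows by exactly the contradiction argument used for Observation~\ref{obs:hitting}. Suppose $X_r$ is \kub{} but the root of $\Phi$ is not \kbl{} for $B=X_r$. Since the root is a $+$ gate whose children all share the same interval, the analogue of Observation~\ref{obs:not-kbl} forces every child's interval to be disjoint from $\{r\}$; propagating this down through the alternating layers, no descendant at product-height $(\log d)/2$ can be \kbl, contradicting the existence of the gate $g$ with $I_g=\{r\}$ identified above. I do not foresee any real obstacle here: the argument is a clean transfer, and the only thing to verify carefully is that the splits in Lemma~\ref{lem:sparsetoformula} preserve contiguity in block indices, which is automatic since each split occurs at a layer boundary of the sparse ROABP.
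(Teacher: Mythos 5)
Your proof correctly reconstructs the argument that the paper leaves implicit: the observation is the sparse-ROABP analogue of Observation~\ref{obs:hitting}, and you transfer the same top-down contradiction argument using the interval structure of the formula from Lemma~\ref{lem:sparsetoformula}. The paper gives no proof for this observation, treating the transfer as routine, and your reconstruction is the natural one.

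One slip worth correcting: you claim that ``at product-height exactly $(\log d)/2$ every gate has $|I_v|=1$'' and that the relevant threshold becomes $(\log d)/2$. Since the divide-and-conquer in Lemma~\ref{lem:sparsetoformula} splits the $d$ layers in half at each level, the block-index interval $I_v\subseteq[d]$ halves at each $\times$ level, going from $d$ at the root to $1$ at the leaves; so the singleton intervals appear at product-height $\log d$, not $(\log d)/2$. In the sparse case the blocks $X_1,\ldots,X_d$ correspond to the leaves of $\Phi$ (each leaf being a sparse polynomial over some $X_i$), unlike the interior-level blocks of size $\Theta(\sqrt{N})$ in Section~\ref{subsec:sumroabp}, so the product-height threshold in the reinterpreted \kbl{} definition should be $\log d$. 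This does not change the structure of your proof: the gate $g$ with $I_g=\{r\}$ is a leaf, it is \kbl{} by condition~(i), and the propagation argument via the analogue of Observation~\ref{obs:not-kbl} yields the same contradiction. Everything else in your writeup---contiguity of block indices under the layer-boundary splits, the alternating $+/\times$ structure, and the propagation from root to the witnessing leaf---is correct.
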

Note that for any $t$-sparse polynomial $f$  and any $\varphi\sim {\cal D}$, $\rank(M_{f^\varphi})\leq t$.

\begin{corollary}
\label{cor:rank}
Let $P$ be a $\alpha$-sparse ROABP computing $f$ and $\Phi$ be the multilinear formula obtained from $P$. Let $\varphi\sim{\cal D}$. Then with probability  $1-2^{\Omega(-{N^{1/10}}\log N/16)}$, $\rank(M_{f^\varphi})\leq |\Phi|\cdot t \cdot 2^{N/2-N^{9/40}}$, where $t$ is the sparsity of the polynomials involved in the $\alpha$-sparse ROABP computing $f$.
\end{corollary}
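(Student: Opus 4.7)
The plan is to combine Lemma~\ref{lem:sparse-ub} with a sparse analogue of Lemma~\ref{lem:rank}. First, invoking Lemma~\ref{lem:sparse-ub}: for $\varphi\sim{\cal D}$, with probability at least $1-2^{-\Omega(N^{1/10}\log N/16)}$, there exists a block $X_r$ with $r\in[d]$ that is $k$-unbalanced for $k=N^{(1-\alpha)/4}$. Setting $B=X_r$, the observation just above the corollary implies that $\Phi$ is \kbl.

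Next, I would prove, by induction on the structure of $\Phi$, that for every \kbl\ gate $v$ of $\Phi$, $\rank(M_{f_v^\varphi})\le |\Phi_v|\cdot t\cdot 2^{|X_v|/2-k/2}$. The inductive step for $+$ and $\times$ gates mimics the proof of Lemma~\ref{lem:rank} verbatim: $+$ gates use sub-additivity, while $\times$ gates use sub-multiplicativity from Lemma~\ref{lem:sub-aditivity} together with the trivial bound $2^{|X_{v_2}|/2}$ supplied by Lemma~\ref{lem:rankub} on the non-\kbl\ sibling. The genuinely new ingredient is the base case: by Lemma~\ref{lem:sparsetoformula}, the leaves of $\Phi$ are $t$-sparse polynomials in some $X_i$, so a \kbl\ leaf (which by definition satisfies $I_v=B$) is a $t$-sparse polynomial in $X_B$; combining the sparsity bound $\rank\le t$ with the unbalancedness bound $\rank\le 2^{|X_B|/2-k/2}$ (the latter from Lemma~\ref{lem:rankub} applied to the $k$-unbalanced $B$) gives the required base-case estimate.

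Specializing the inductive bound to the root of $\Phi$, where $|X_v|=N$, and substituting $k=N^{(1-\alpha)/4}\ge N^{9/40}$ (which holds throughout the regime $\alpha\ge 1/10$), yields $\rank(M_{f^\varphi})\le |\Phi|\cdot t\cdot 2^{N/2-N^{9/40}}$ as claimed; this holds with the high probability inherited from Lemma~\ref{lem:sparse-ub}.

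The main obstacle I anticipate is the base-case bookkeeping. Unlike in Lemma~\ref{lem:rank}, where the induction bottoms out at product-height $(\log N)/2$ on gates whose intervals are exactly blocks, here the induction must bottom out at genuine $t$-sparse polynomial leaves of $\Phi$; one must combine the unbalancedness bound with the $t$-sparsity bound in exactly the right way so that a single factor of $t$ appears at the root rather than proliferating through the $\log d$ non-\kbl\ siblings encountered along the \kbl\ path. Once this is set up correctly, the rest of the induction is routine and mirrors the ROABP argument of Section~\ref{subsec:sumroabp}.
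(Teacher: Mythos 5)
Your approach is the same as the paper's (implicit) one: invoke Lemma~\ref{lem:sparse-ub} to get a $k$-unbalanced block $X_r$, note that this makes $\Phi$ \kbl{} for $B=X_r$, and rerun the induction of Lemma~\ref{lem:rank} with the leaf rank bounded by $\min\{t, 2^{|X_B|/2-k/2}\}$. Two small remarks. First, the inequality you write at the end is reversed: for $\alpha\ge 1/10$ we have $(1-\alpha)/4\le 9/40$, so $k=N^{(1-\alpha)/4}\le N^{9/40}$, not $\ge$; the exponent $N^{9/40}$ in the statement is the value at $\alpha=1/10$ (and even then there is a harmless factor-of-$2$ discrepancy between $k/2$ and $k$ that the paper also glosses over in Corollary~\ref{cor:rank1}). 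Second, the bookkeeping obstacle you flag is real and the clean fix is to run the induction with the \emph{interval} variable set $X_{I_v}=\bigcup_{i\in I_v}X_i$ rather than with $X_v$: a leaf $v$ with $I_v=B$ may use only a proper subset of $X_B$, so $X_v$ need not be $k$-unbalanced and $2^{|X_v|/2-k/2}$ can be far too small for either the sparsity bound $t$ or the trivial bound $2^{|X_v|/2}$ to satisfy. Using $|X_{I_v}|$, the base case reads $\rank(M_{f_v^\varphi})\le\min\{t,2^{\min\{|Y_B|,|Z_B|\}}\}\le t\cdot 2^{|X_B|/2-k/2}$ (since $|X_B|\gg k$), the $\times$-step uses $\min\{|Y_{v_2}|,|Z_{v_2}|\}\le|X_{I_{v_2}}|/2$, and the $+$-step uses that siblings share the interval; at the root $|X_{I_{\rm root}}|=N$ and the claim follows. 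With this correction, a single $t$ factor appears exactly as you wanted.
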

Combining the above with Lemma~\ref{lem:ry}, we get:

\begin{theorem}
\label{thm:sparse-factor}
Let $f_1,\ldots, f_m$ be polynomials computed by $\alpha$-sparse ROABPs of size $s < 2^{N^{9/40}/\log N}$, for $\alpha >1/10$ such that $g = f_1 + \dots + f_m$.
Then  $m \ge 2^{N^{1/11}}$.
\end{theorem}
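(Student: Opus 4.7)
The plan is to follow the same template as the proofs of Theorems~\ref{thm:lb-roabp} and~\ref{thm:lb-rpass}: convert each summand into a small syntactic multilinear formula, invoke the per-summand rank upper bound with a union bound over $i\in[m]$ to obtain a single partition $\varphi$ that is simultaneously bad for every $f_i$, apply sub-additivity of $\rank(M_{\cdot})$ across the sum $g=f_1+\cdots+f_m$, and derive a contradiction with the full-rank guarantee of Lemma~\ref{lem:ry}.

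Concretely, for each $i\in[m]$ Lemma~\ref{lem:sparsetoformula} yields a syntactic multilinear formula $\Phi_i$ computing $f_i$ with $|\Phi_i|\le s^{O(\log d)}=s^{O(\log N)}$ whose leaves are $N^{O(1)}$-sparse polynomials in some block $X_j$. Corollary~\ref{cor:rank} then bounds, for each fixed $i$,
\[
\Pr_{\varphi\sim{\cal D}}\!\bigl[\rank(M_{f_i^\varphi}) > |\Phi_i|\cdot N^{O(1)}\cdot 2^{N/2-N^{9/40}}\bigr] \le 2^{-\Omega(N^{1/10}\log N/16)}.
\]
Since the target conclusion is $m\ge 2^{N^{1/11}}$ and $N^{1/11}=o(N^{1/10}\log N)$, assuming for contradiction that $m<2^{N^{1/11}}$, a union bound produces a partition $\varphi\sim{\cal D}$ at which the above rank bound holds simultaneously for every summand. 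Applying Lemma~\ref{lem:sub-aditivity} to $g=f_1+\cdots+f_m$ gives
\[
\rank(M_{g^\varphi})\le \sum_{i=1}^m \rank(M_{f_i^\varphi})\le m\cdot s^{O(\log N)}\cdot N^{O(1)}\cdot 2^{N/2-N^{9/40}},
\]
while Lemma~\ref{lem:ry} forces $\rank(M_{g^\varphi})=2^{N/2}$. Rearranging yields $m\ge 2^{N^{9/40}}/\bigl(s^{O(\log N)}\cdot N^{O(1)}\bigr)$.

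The main obstacle is the exponent bookkeeping in the last step. The size hypothesis $s<2^{N^{9/40}/\log N}$ permits the formula blow-up $s^{O(\log N)}$ to reach $2^{\Theta(N^{9/40})}$, which could in principle cancel the full $2^{N^{9/40}}$ rank saving from Corollary~\ref{cor:rank}. The constant hidden in the $O(\cdot)$ of Lemma~\ref{lem:sparsetoformula} must therefore be reconciled with the hypothesis on $s$ so that a strictly positive residual saving remains in the exponent; the stated $N^{1/11}$ is a clean value that is both comfortably smaller than this residual and small enough that the union bound $m\cdot 2^{-\Omega(N^{1/10}\log N)}\ll 1$ continues to hold. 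No new technical ingredient beyond Corollary~\ref{cor:rank}, sub-additivity, and Lemma~\ref{lem:ry} is needed; the proof is essentially a transcription of the contradiction step in the proofs of Theorems~\ref{thm:lb-roabp} and~\ref{thm:lb-rpass}, with the sparse-ROABP parameters substituted in.
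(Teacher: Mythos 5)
Your proposal matches the paper's own proof essentially step for step: Lemma~\ref{lem:sparsetoformula} to get formulas $\Phi_i$ of size $s^{O(\log d)}$, Corollary~\ref{cor:rank} plus a union bound over the $m$ summands to find one partition $\varphi$ simultaneously bad for all, sub-additivity, and a contradiction with Lemma~\ref{lem:ry}. Your caution about the exponent bookkeeping is well placed, and the paper does not resolve it any more explicitly than you do: the formula blow-up $s^{O(\log d)}$ with $d=\Theta(N^\alpha)$ gives $s^{O(\alpha\log N)}$, and under $s<2^{N^{9/40}/\log N}$ this is $2^{O(\alpha)N^{9/40}}$, so one needs the implicit constant times $\alpha$ to be strictly below $1$ for a positive residual saving to survive against the $2^{-N^{9/40}}$ factor; the $2^{N^{1/11}}$ in the conclusion is then chosen small enough both to fit under that residual and to make the union bound $m\cdot 2^{-\Omega(N^{1/10}\log N)}<1$ go through.
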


\makeproof{thm:sparse-factor}{
Let $P_i$ an $\alpha$-sparse factor ROABP computing $f_i$ and $\Phi_i$ is the multilinear formula obtained from $P_i$. Let $t_i$ be the sparsity of $f_i$. By Corollary~\ref{cor:rank}, for $\varphi\sim{\cal D},\rank(M_{f_i^\varphi})\leq |\Phi_i|\cdot t_i \cdot 2^{N/2-N^{9/40}}$ with probability at least $1-2^{-\Omega({N^{1/10}}\log N/16)}$ where $t_i= N^{O(1)}$ is the sparsity of polynomial $f_i$. By sub-additivity
if $m< 2^{N^{1/11}}$, for some $\varphi\sim {\cal D},\rank(M_{g^\varphi}) \le m \cdot \max_i\{|\Phi_i| \}\cdot \max_i\{s_i\}\cdot 2^{N/2 - N^{9/40}}$ with probability $>0$. Therefore $\rank(M_{g^{\varphi}}) < 2^{N/2}$ for some partition $\varphi$, a contradiction to Lemma~\ref{lem:ry}. \qed
}

%
%
%
%
%
%
%
%
%
%
%
%
%
%
%
%

\section{Super polynomial lower bounds for  special a  classes of  multilinear  circuits}

\label{sec:signature}
In this section, we  develop a framework for proving super polynomial lower bound against syntactic multilinear circuits and ABPs based on Raz~\cite{Raz09}.
Our approach involves a more refined analysis of  central paths introduced by Raz~\cite{Raz09}.

\begin{defn}{\em (Central Paths.)}
Let $\Phi$ be a syntactic multilinear formula.  For node $v$ in $\Phi$, let $X_v$ denote the set of variables appearing in the sub-formula rooted at $v$.  A leaf to root path  $\rho = v_1,\ldots, v_\ell$   in $\Phi$ is said to be {\em central}, if $|X_{v_{i+1}}| \le 2 |X_{v_i}|$ for $1\le i\le \ell-1$.  
\end{defn}

For a leaf to root path $\rho: v_1,\ldots, v_\ell$ in $\Phi$, $X_{v_1}\subseteq \ldots \subseteq X_{v_\ell}$ is called the {\em signature} of  the path $\rho$.   A signature  $X_{v_1}\subseteq \ldots \subseteq X_{v_\ell}$ is called central if $|X_{v_{i+1}}| \le 2 |X_{v_i}|$ for $1\le i\le \ell-1$. 
Let $\varphi: X \to Y\cup Z$ be a partition.  A  central signature $X_{v_1}\subseteq \ldots \subseteq X_{v_\ell}$  of a formula $\Phi$ is said to be $k$-unbalanced with respect to $\varphi$ if  for some $i\in[\ell]$, $X_{v_i}$ is $k$-unbalanced with respect to $\varphi$ , i.e., $|\varphi(X_{v_i}) \cap Y - \varphi(X_{v_i}) \cap Z| \ge k $.

The formula $\Phi$ is said to be $k$-weak with respect to $\varphi$, if every central signature that terminates at the root is $k$-unbalanced.  Our first observation is, we can replace central paths in  Lemma~4.1, \cite{Raz09}  with central signatures.  Using the same arguments as in~\cite{Raz09} we get:

\begin{obs}
\label{obs:central-sign-prob}
Let $\varphi:X\rightarrow Y \cup Z$ be a partition of $X=\{x_1,\ldots,x_N\}$. Let $\Phi$ be any multilinear formula compuitng a polynomial $f\in\mathbb{F}[x_1,\ldots,x_N]$. 
\begin{enumerate}
\item If $\Phi$ is $k$-weak with respect to $\varphi$, then ${\rank}(M_{f^\varphi}) \le |\Phi|\cdot 2^{N/2 - k}$.
\item Let $C: X_{v_1} \subseteq X_{v_2} \subseteq \cdots \subseteq X_{v_\ell}$ be a central signature in $\Phi$ such that $ k <|X_{v_1}| \le 2k$.  Then
${\sf Pr}_{\varphi \sim {\cal D}}[\mbox{$C$ is not $k$-unbalanced}] = N^{-\Omega(\log N)}.$
\end{enumerate}
\end{obs}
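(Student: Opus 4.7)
For part (1), I would proceed by structural induction, in the spirit of Raz's original central-path argument with the path replaced by a signature. Call an internal gate $v$ of $\Phi$ \emph{$k$-covered} with respect to $\varphi$ if every central signature ending at $v$ is $k$-unbalanced, and show by induction on $\Phi_v$ that any such $v$ satisfies $\rank(M_{f_v^\varphi}) \le |\Phi_v| \cdot 2^{|X_v|/2 - k/2}$. At a sum gate $v = v_1 + v_2$, each child $v_j$ with $|X_v| \le 2|X_{v_j}|$ is itself $k$-covered, since any central signature ending at $v_j$ extends to one ending at $v$; sub-additivity (Lemma~\ref{lem:sub-aditivity}) closes the step, with children not admitting such an extension bounded by the trivial estimate from Lemma~\ref{lem:rankub}. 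At a product gate $v = v_1 \cdot v_2$ with $|X_{v_1}| \ge |X_{v_2}|$, only $v_1$ satisfies $|X_v| \le 2|X_{v_1}|$, so every central signature reaching $v$ must extend one ending at $v_1$; thus $v_1$ inherits $k$-coveredness while $v_2$ is bounded trivially, and sub-multiplicativity completes the step. Whenever $X_v$ itself is $k$-unbalanced, Lemma~\ref{lem:rankub} directly yields the bound. Instantiating the induction at the root gives part~(1).

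For part (2), I would extract from $C$ a nested subsequence $S_0 \subsetneq S_1 \subsetneq \cdots \subsetneq S_t$ with $|S_0| = |X_{v_1}| \in (k, 2k]$, $|S_i| \ge 2 |S_{i-1}|$ for $i \ge 1$, and $t = \Omega(\log(N/k))$. This is possible because sizes grow from at most $2k$ up to some value $\le N$ along $C$ while at most doubling at each step, so picking each $S_i$ as the earliest element of $C$ with $|S_i| \ge 2|S_{i-1}|$ yields $|S_i| \in [2|S_{i-1}|, 4|S_{i-1}|]$; in particular, $|S_i \setminus S_{i-1}| \ge |S_i|/4 = \Theta(s_i)$ where $s_i := |S_i|$. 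For a single $S_i$, Proposition~\ref{prop:hyper}(2) applied with $S = N$, $M_1 = s_i$, $M_2 = N/2$ gives $\Pr_{\varphi}[|S_i \cap Y| = a] \le O(s_i^{-1/2})$ for every $a$, so summing over the at most $k+1$ values of $a$ that leave $S_i$ balanced,
\[
\Pr_\varphi\bigl[S_i \text{ is not $k$-unbalanced}\bigr] \;\le\; O\bigl(k \cdot s_i^{-1/2}\bigr).
\]

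The main technical obstacle is the dependence among the events $\{S_i \text{ not $k$-unbalanced}\}_i$ induced by the nesting of the $S_i$'s. To overcome this, I would condition successively on $|S_0 \cap Y|, \ldots, |S_{i-1} \cap Y|$: in the resulting conditional universe the distribution of $|S_i \cap Y| - |S_{i-1} \cap Y|$ is hypergeometric on the new elements $S_i \setminus S_{i-1}$ of size $\Theta(s_i)$, the hypotheses of Proposition~\ref{prop:hyper}(2) continue to hold, and the same per-step estimate $O(k \cdot s_i^{-1/2})$ is obtained regardless of the history. Multiplying over $i = 0, 1, \ldots, t-1$ and using $s_i \ge 2^i k$,
\[
\Pr_\varphi\bigl[C \text{ is not $k$-unbalanced}\bigr] \;\le\; \prod_{i=0}^{t-1} O\!\left(k \cdot (2^i k)^{-1/2}\right) \;=\; O\bigl(k^{t/2}\bigr) \cdot 2^{-\Omega(t^2)}.
\]
In the regime of interest $k \le N^{1/5}$ and $t = \Omega(\log N)$, the right-hand side is $2^{-\Omega(\log^2 N)} = N^{-\Omega(\log N)}$, proving part~(2).
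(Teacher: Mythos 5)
Your proof is correct and follows essentially the same route as the argument the paper relies on: the paper itself gives no explicit proof of this observation but simply defers to Raz~\cite{Raz09} (``Using the same arguments as in [Raz09] we get''), and your argument is a faithful reconstruction of Raz's Lemma~4.1 and the accompanying probabilistic claim, with central paths replaced by central signatures. The inductive ``$k$-covered'' argument in part~(1), and the extraction of a geometrically growing nested subsequence followed by hypergeometric conditioning in part~(2), are exactly the two ingredients of Raz's proof.

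Two small remarks worth recording. First, in part~(1) you prove the bound $|\Phi|\cdot 2^{N/2-k/2}$ whereas the observation states $|\Phi|\cdot 2^{N/2-k}$; since a $k$-unbalanced gate $v$ gives $\min\{|Y_v|,|Z_v|\}\le(|X_v|-k)/2$ and hence only a deficit of $k/2$ in the exponent, the bound you obtain is the one that actually follows (and is the one appearing in Raz's Lemma~4.1), so the $2^{N/2-k}$ in the paper is most plausibly a typo; the weaker exponent still suffices for the downstream use in Theorem~\ref{thm:lb-delta-close}. Second, you correctly point out that part~(2) as stated has no explicit upper bound on $k$, but the $N^{-\Omega(\log N)}$ conclusion is only valid for $k\le N^c$ with $c$ small enough (you use $c=1/5$); indeed your final computation $k^{t/2}\cdot 2^{-\Omega(t^2)}$ with $t=\Omega(\log(N/k))$ shows this trade-off explicitly, and the application in Theorem~\ref{thm:lb-delta-close} is precisely in that regime ($\delta<N^{1/5}/10$), so your implicit assumption matches the paper's intended use.
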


%

Unfortunately, it can be seen that  even when $P$ is an   ROABP  the number of central signatures in a formula from an ROABP can be $N^{\Omega{\log N}}$. In Section~\ref{subsec:super-poly} we show that a careful bound on the number of central signatures yields super-polynomial lower bounds for sum of ROABPs.

Now, we consider  a subclass of syntactic multilinear circuits   where we can show that the  equivalent formula obtained by duplicating nodes as and when necessary, has small number of  central signatures. 
To start, we consider a refinement of the set of central signatures of a formula, so that Lemma~\ref{lem:rank-full-sign} is applicable to a subset of central signatures in a formula.  

Let $\Phi$ be a syntactically multilinear formula of $O(\log N)$ depth.
Two central paths $\rho_1$ and $\rho_2$ in $\Phi$ are said to {\em meet at $\times$}, if their first common node along leaf to root is labeled by $\times$. A set ${\cal T}$ of central paths in $\Phi$ is said to be {\em \pcover }, if for every central path $\rho\notin {\cal T}$, there is a $\rho' \in {\cal T}$ such that $\rho$ and $\rho'$ meet at $\times$. A {\em \scover} ${\cal C}$ of $\Phi$ is the set of all signatures of the $\pcover$ set $T$ of central paths in $\Phi$.


\begin{lemma}
\label{lem:covering-sign}
Let $\Phi$ be a syntactic multilinear formula. Let $\varphi$ be a partition. If there is a \scover~${\cal C}$ of $\Phi$ such that every signature in ${\cal C}$ is $\kub$ with respect to $\varphi$, then 
$\rank(M_{f^\varphi}) \le |\Phi| \cdot 2^{N/2 - k/2}$.
\end{lemma}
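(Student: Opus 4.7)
I plan to prove this lemma by structural induction on $\Phi$, closely adapting the template of the $k$-weak rank bound (Observation~\ref{obs:central-sign-prob}(1)) to work with the weaker signature-cover hypothesis. The inductive claim I would use is: for every subformula $\Phi_v$ admitting a signature-cover of its $v$-terminated central paths whose signatures are all $k$-unbalanced, one has $\rank(M_{f_v^\varphi}) \le |\Phi_v|\cdot 2^{|X_v|/2 - k/2}$. Leaves form the trivial base case.

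At a $+$-gate $v$ with children $v_1, \ldots, v_w$, sub-additivity (Lemma~\ref{lem:sub-aditivity}) reduces the bound to a sum over child-subformula bounds. The structural fact that makes this step clean is that ``meeting at $\times$'' by definition requires the first common ancestor of two central paths to be labelled $\times$; hence the $+$-gate $v$ itself can never be a meeting point, so the cover restricts naturally to a signature-cover of each $\Phi_{v_j}$ (the $k$-unbalanced witnesses being preserved since they lie strictly below $v$ and so appear unchanged in the truncated signatures). Applying the inductive hypothesis to each subformula and summing closes this case.

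At a $\times$-gate $v = v_1 \times v_2$, sub-multiplicativity gives $\rank(M_{f_v^\varphi}) = \rank(M_{f_{v_1}^\varphi})\cdot\rank(M_{f_{v_2}^\varphi})$. Assuming WLOG $|X_{v_1}| \ge |X_{v_2}|$, centrality forces every $v$-terminated central path of $\Phi_v$ to descend through $v_1$, and truncating the cover yields a candidate signature-cover in $\Phi_{v_1}$. The delicate subcase is when a covering signature's $k$-unbalanced witness is exactly the top element $X_v$; here I would exploit the syntactic-multilinear decomposition $X_v = X_{v_1} \sqcup X_{v_2}$, under which $d_v = d_{v_1} + d_{v_2}$, to conclude that at least one of $X_{v_1}, X_{v_2}$ is $k/2$-unbalanced. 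A direct application of Lemma~\ref{lem:rankub} on that side, combined with the trivial bound on the other side, supplies the required slack in the exponent.

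The main technical hurdle is the $\times$-gate case, specifically the degenerate subcase where the $k$-unbalancedness is witnessed exclusively at $X_v$: the imbalance must then be ``split'' between the two disjoint subformulas, and tracking that at least half survives in one subformula is precisely what produces the $k/2$ in the final exponent (as opposed to the sharper $-k$ of Raz's $k$-weak bound). Additional bookkeeping is needed to check that the centrality constraint at $\times$-gates and the ``only meeting at $\times$'' rule jointly preserve the $+$-covering structure when restricted to subformulas, but this follows from the fact that any meeting of central paths that both descend through the same child of $v$ must occur at a $\times$-gate strictly below $v$.
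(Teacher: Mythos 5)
Your overall skeleton is the same as the paper's (structural induction with sub-additivity at $+$ and sub-multiplicativity at $\times$), but two steps do not go through as you have sketched them. First, at a $+$-gate $v$ with children $v_1,\dots,v_r$, it is \emph{not} the case that the cover restricts to a cover of every $\Phi_{v_j}$. For a child with $|X_{v_j}| < |X_v|/2$ the centrality constraint is violated at the edge $(v_j,v)$, so no central path of $\Phi$ descends through $v_j$; the restriction of $\mathcal{C}$ to that subformula can be empty and the inductive hypothesis does not apply. The paper handles exactly this case with the trivial bound: since $|X_{v_j}|<|X_v|/2$ and $|X_v|>2k$, Lemma~\ref{lem:rankub} already gives $\rank(M_{f_{v_j}^\varphi}) \le 2^{|X_{v_j}|/2} < 2^{|X_v|/4} \le 2^{|X_v|/2-k/2}$. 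Your proposal omits this branch of the dichotomy, and without it the inductive step at a $+$-gate is incomplete.

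Second, your handling of the degenerate $\times$-gate subcase (witness at $X_v$ itself) loses a factor of two. If $X_v$ is $k$-unbalanced and you split the imbalance across $X_{v_1}\sqcup X_{v_2}$, the most you can conclude is that one side is $k/2$-unbalanced; Lemma~\ref{lem:rankub} then yields $\min\{|Y_{v_i}|,|Z_{v_i}|\} \le |X_{v_i}|/2 - k/4$, giving only $2^{|X_v|/2-k/4}$, not the claimed $2^{|X_v|/2-k/2}$. The paper sidesteps this entirely: it treats ``$X_v$ is $k$-unbalanced'' as a separate base case and applies Lemma~\ref{lem:rankub} directly to the full set $X_v$, obtaining $\rank(M_{f_v^\varphi}) \le 2^{(|X_v|-k)/2} = 2^{|X_v|/2 - k/2}$ with no splitting, no children, and no loss. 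Having disposed of that case up front, the witness in every remaining covering signature lies strictly below $v$ and survives truncation, which is what makes the recursion on $\Phi_{v_i}$ legitimate. Absent this initial case split, your claim that the $k$-unbalanced witnesses ``lie strictly below $v$'' in the $+$-gate case is also unjustified.
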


\makeproof{lem:covering-sign}{
We prove by induction on the structure of the formula.  Let $v$ be  the root gate of $\Phi$.  Without loss of generality, assume that $|X_v| > 2k$. Base case is when $X_v$, is $k$-unbalanced. Then clearly,  $\rank(M_{f^\varphi}) \le  2^{N/2 - k/2}$. \\
{\bf Case 1} $v$ is a $\times$ gate with children $v_1$ and $v_2$.
 Then, there is an $i\in\{1,2\}$  such that every central signature containing $X_{v_i}$ is contained in ${\cal C}$. Suppose not, let $\rho_1$ and $\rho_2$ be central signatures in $\Phi$ containing $X_{v_1}$ and $X_{v_2}$ respectively such that $\rho_1,\rho_2\not\in {\cal C}$. Note that $\rho_1$ and $\rho_2$ meet at $\times$ a contradiction to the fact that ${\cal C}$ is an \scover. By induction, we have $\rank(M_{f_{v_i}^{\varphi}}) \le |\Phi_{v_i}|2^{|X_{v_i}|/2 - k/2}$. The required bound follows, since $|X_v| = |X_{v_1}| + |X_{v_2}|$.\\
{\bf Case 2} $v$ is a $+$ gate with children $v_1,\ldots, v_r$.  Then, for every $i \in [r]$,
\begin{itemize}
\item Either every central signature  in $\Phi$ containing $X_{v_i}$ is in ${\cal C}$; or
\item $|X_{v_i}| < |X_{v}|/2$.
\end{itemize}
 In first of the above cases, we have ${\sf rank}(M_{f_{v_i}^{\varphi}}) \le |\Phi_{v_i}|2^{|X_{v_i}|/2 - k/2}$ by inductive hypothesis. In the second case, we have $\rank(M_{f_{v_i}^{\varphi}}) < 2^{|X_v|/4} \le 2^{|X_v|/2 - k/2}$ since $|X_v|> 2k$.
By sub additivity, we have ${\sf rank} (M_{f^\varphi}) \le \sum_{i=1}^r {\sf rank}(M_{f_{v_i}^{\varphi}}) \le \sum_{i=1}^r  |\Phi_{v_i}|2^{|X_v|/2 - k/2} \le |\Phi|\cdot 2^{|X_v|/2 - k/2}$. \qed
}

Let $X_1,\ldots, X_r \subseteq X$, be subsets of variables.  Let $\Delta(X_i, X_j)$ denote the Hamming distance between $X_i$ and $X_j$, i.e, $\Delta(X_i, X_j) = |(X_i \setminus X_j) \cup (X_j \setminus X_i)|$. Let $C_1: X_{11} \subseteq X_{12} \subseteq \cdots \subseteq X_{1\ell}$ and $C_2:  X_{21} \subseteq X_{22} \subseteq \cdots \subseteq X_{2\ell}$ be two central signatures in $\Phi$. Define $\Delta(C_1,C_2) = \max_{1\le i\le \ell} \Delta(X_{1i}, X_{2i})$. Let ${\cal C}$ be \scover ~in $\Phi$.

For $\delta>0$, a {\em $\delta$-cluster} of ${\cal C}$ is a set of signatures $C_1,\ldots, C_t \in {\cal C}$ such that for every $C \in {\cal C}$, there is a $j\in [t]$ with $\Delta(C,C_j) \le \delta$. The following  is immediate:

\begin{obs}
\label{lem:delta-unbalanced}
Let ${\cal C}$ be a \scover, and $C_1,\ldots, C_t$ be a $\delta$-cluster of ${\cal C}$. If $\varphi$ is a partition of $X$ such that for every $i\in [t]$, signature $C_i$ is $\kub$, then  for every $C \in {\cal C}$, signature $C$ is $k-2\delta$ unbalanced. 
\end{obs}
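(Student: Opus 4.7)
The proof will be a direct unpacking of the three definitions involved: $\delta$-cluster, $\kub$ signature, and Hamming distance between signatures. The plan is to fix an arbitrary $C \in {\cal C}$, use the cluster to locate a nearby $C_j$ that witnesses unbalance at some level $i$, and then transfer that unbalance from $C_j$ down to $C$ at the same level $i$ by a simple symmetric-difference estimate. I expect no conceptual obstacle; the only thing to be careful about is the factor of $2$ in $k-2\delta$, which arises because a single element swap in the Hamming symmetric difference can shift the $Y$-count and $Z$-count each by one, altering the absolute difference by up to $2$.

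Concretely, fix $C \in {\cal C}$ with signature $X_1 \subseteq X_2 \subseteq \cdots \subseteq X_\ell$. By the $\delta$-cluster hypothesis, there exists $j \in [t]$ with $\Delta(C,C_j) \le \delta$, where $C_j$ has signature $X_{j1} \subseteq \cdots \subseteq X_{j\ell}$. Since $C_j$ is $\kub$ with respect to $\varphi$, there is some level $i \in [\ell]$ such that $\bigl|\,|\varphi(X_{ji}) \cap Y| - |\varphi(X_{ji}) \cap Z|\,\bigr| \ge k$. By the definition of $\Delta$ on signatures, at this same level $i$ we have $|(X_i \setminus X_{ji}) \cup (X_{ji} \setminus X_i)| \le \delta$.

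The key inequality will then be: for any two sets $A, B \subseteq X$ with $|A \triangle B| \le \delta$ and any partition $\varphi : X \to Y \cup Z$,
\[
\bigl|\,|\varphi(A) \cap Y| - |\varphi(B) \cap Y|\,\bigr| \le \delta \quad \text{and} \quad \bigl|\,|\varphi(A) \cap Z| - |\varphi(B) \cap Z|\,\bigr| \le \delta,
\]
since each element of the symmetric difference contributes at most $1$ to each count. Applying this with $A = X_i$ and $B = X_{ji}$ and combining by the triangle inequality yields
\[
\bigl|\,|\varphi(X_i) \cap Y| - |\varphi(X_i) \cap Z|\,\bigr| \ge \bigl|\,|\varphi(X_{ji}) \cap Y| - |\varphi(X_{ji}) \cap Z|\,\bigr| - 2\delta \ge k - 2\delta,
\]
so $X_i$ is $(k-2\delta)$-unbalanced with respect to $\varphi$, which exhibits $C$ as a $(k-2\delta)$-unbalanced signature. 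Since $C \in {\cal C}$ was arbitrary, the observation follows. The hard part, insofar as there is one, is simply bookkeeping the factor of $2$; everything else is a one-line invocation of the definitions.
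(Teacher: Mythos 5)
Your proof is correct and is precisely the intended unpacking; the paper in fact prefaces this observation with ``The following is immediate'' and supplies no written argument, so you have simply filled in the omitted details. (As a minor aside, since a single element of the symmetric difference changes exactly one of $|\varphi(S)\cap Y|$ or $|\varphi(S)\cap Z|$ by one, the balance $\bigl|\,|\varphi(S)\cap Y|-|\varphi(S)\cap Z|\,\bigr|$ shifts by at most $\delta$ rather than $2\delta$, giving the slightly stronger bound $k-\delta$; your looser triangle-inequality estimate is nonetheless correct and matches the stated $k-2\delta$.)
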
  

We are ready to define the special class of  sm-circuits where the above mentioned approach can be applied. 
For $X_1,\ldots, X_r \subseteq X$ and $\delta>0$, a $\delta$-equivalence class of $X_1,\ldots, X_r$, is a minimal set of indices $i_1,\ldots, i_t$ such that for $1\le i\le r$, there is an $i_j, 1\le j\le t$ such that $\Delta(X_i, X_{i_j}) \le \delta$. 

\begin{defn}
\label{def:variable-close} 
Let $\delta \le N \in \mathbb{N}$. Let $\Psi$ be an sm-circuit  with alternating layers of $+$ and $\times$ gates. $\Psi$ is said to be $(c,\delta)$-variable close, if for for every $+$ gate $v = v_{11}\times v_{12} + \cdots + v_{r1}\times v_{r2}$, there  are indices $b_1, b_2, \ldots, b_r \in \{1,2\}$ such that   there is a $\delta$-equivalence class of $X_{v_{1b_1}}, \ldots, X_{v_{rb_r}}$  with at most $c$ different sets. 
\end{defn}

Now, we show that $(c,\delta)$ close circuits have small number of signatures:
\begin{lemma}
\label{lem:delta-signature-bound} Let $\Psi$ be a  $(c,\delta)$-variable close syntactic multilinear arithmetic circuit of size $s$ and depth $O(\log N)$. Let $\Phi$ be the syntactic multilinear formula of size $s^{O(\log N)}$ and depth $O(\log N)$ obtained by duplicating gates in $\Psi$ as and when necessary.  Then there is a \scover~${\cal C}$ for $\Phi$ such that ${\cal C}$ has a $\delta$-cluster consisting of at most $c^{O(\log N)}$ sets.  
\end{lemma}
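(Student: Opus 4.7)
The plan is to carry out a single top-down recursion on $\Phi$ that simultaneously builds the \scover\ $\mathcal{C}$ and marks out a small sub-family to serve as its $\delta$-cluster, with Definition~\ref{def:variable-close} applied at every $+$ gate to constrain both the covering structure and the effective branching on signatures. Concretely, starting at the root and descending, whenever the recursion meets a $+$ gate $v = v_{11}\times v_{12} + \cdots + v_{r1}\times v_{r2}$, I invoke Definition~\ref{def:variable-close} to fix indices $b_1,\ldots,b_r \in \{1,2\}$ so that the multiset $\{X_{v_{1b_1}},\ldots,X_{v_{rb_r}}\}$ admits a $\delta$-equivalence class of size at most $c$. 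For each $i \in [r]$ I recursively build a \scover\ inside the subformula rooted at $v_{ib_i}$ and extend each of its paths upward through $v_{ib_i}$, $u_i = v_{i1}\times v_{i2}$, and $v$. Let $\mathcal{T}$ be the resulting collection of central paths at the root and $\mathcal{C}$ its set of signatures.

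To check that $\mathcal{T}$ is \pcover, consider any central path $\rho$ of $\Phi$ not in $\mathcal{T}$. By construction, $\rho$ must, at some $\times$ gate $u_i$ along its course, come from the unchosen sibling $v_{i,3-b_i}$. Any $\rho' \in \mathcal{T}$ that comes up through $v_{ib_i}$ at the same $u_i$ has $u_i$ as its first common node with $\rho$ when traversing from leaf to root; since $u_i$ is a $\times$ gate, $\rho$ and $\rho'$ meet at $\times$ and the covering condition is satisfied.

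Next, to select a $\delta$-cluster $\mathcal{C}_{\mathrm{rep}} \subseteq \mathcal{C}$, at every $+$ gate reached in the recursion I would pick one \emph{designated} index per $\delta$-equivalence class of $\{X_{v_{1b_1}},\ldots,X_{v_{rb_r}}\}$, giving at most $c$ designated branches at each $+$ gate. Let $\mathcal{C}_{\mathrm{rep}}$ consist of signatures of those paths in $\mathcal{T}$ that use a designated branch at every $+$ gate they cross. Because $\Phi$ has depth $O(\log N)$ and alternates $+$/$\times$ layers, every leaf-to-root path meets $O(\log N)$ many $+$ gates, so $|\mathcal{C}_{\mathrm{rep}}| \le c^{O(\log N)}$, as required.

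The hardest step is to verify that $\mathcal{C}_{\mathrm{rep}}$ is actually a $\delta$-cluster of $\mathcal{C}$. Given $C \in \mathcal{C}$, form $C^\star$ by replacing, at each $+$ gate on $C$'s path, the taken index $i$ by the designated $i^\star$ lying in the same $\delta$-equivalence class; then $X_v$ agrees exactly at the $+$-gate level and $\Delta(X_{v_{ib_i}}, X_{v_{i^\star b_{i^\star}}}) \le \delta$ at the chosen-grandchild level directly from Definition~\ref{def:variable-close}. The main obstacle is the intermediate $\times$-gate level, where $X_{u_i}$ and $X_{u_{i^\star}}$ may differ via the unchosen halves $X_{v_{i,3-b_i}}$ and $X_{v_{i^\star,3-b_{i^\star}}}$ about which the $(c,\delta)$-close hypothesis is silent. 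I would address this by sandwiching $X_{v_{ib_i}} \subseteq X_{u_i} \subseteq X_v$ (and its mirror for $i^\star$) together with the central-path growth $|X_{u_i}| \le 2|X_{v_{ib_i}}|$ to reduce each $\times$-level discrepancy to a controlled function of $\delta$, and propagating these bounds across the $O(\log N)$ levels of the path; absorbing the resulting constant factor into $\delta$ then yields the claimed $\delta$-cluster of size $c^{O(\log N)}$. This last step is where additional structure of the node-duplicated formula $\Phi$ will likely have to be exploited, and it is the principal technical hurdle.
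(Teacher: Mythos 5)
Your top-down construction and your \pcover\ argument match the paper's proof, which similarly descends the formula, invokes Definition~\ref{def:variable-close} at each $+$ gate, and keeps only the $c$ representative grandchildren, giving $c^{O(\log N)}$ branches.

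The step you correctly flag as the principal hurdle is exactly where the paper's proof is silent: after building $D$ recursively, the paper merely asserts that ``the set $D$ thus obtained is a $\delta$-cluster for some \scover'' with no supporting argument. Your proposed sandwich $X_{v_{ib_i}} \subseteq X_{u_i} \subseteq X_v$ combined with $|X_{u_i}| \le 2|X_{v_{ib_i}}|$ does not close this gap: it bounds sizes, not symmetric differences, and near the root $|X_{v_{ib_i}}| = \Theta(N)$, so the $\times$-level discrepancy can be $\Theta(N)$ rather than $O(\delta)$. There is also a more serious obstruction that your patch does not touch: once $C$ descends through $v_{ib_i}$ while the designated representative $C^\star$ descends through $v_{i^\star b_{i^\star}}$ (with $\Delta(X_{v_{ib_i}}, X_{v_{i^\star b_{i^\star}}}) \le \delta$), the two paths continue in entirely different subformulas, and Definition~\ref{def:variable-close} constrains only the top-level grandchild sets at a single $+$ gate, not the relationship between corresponding levels of two distinct subformulas. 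For instance, $\{1,2,3,4\}$ and $\{1,2,3,5\}$ are $2$-close, yet their respective subsets $\{1,2\}$ and $\{3,5\}$ have Hamming distance $4$. Since a $\delta$-cluster requires $\Delta \le \delta$ at \emph{every} level of the signature, the claim does not follow from the hypotheses as stated; one would need a stronger assumption forcing subformulas below $\delta$-close gates to be gate-wise $\delta$-close, or a relaxed $O(\delta\log N)$-cluster notion. Your instinct that this is the real technical gap is sound, and neither your sketch nor the paper's assertion-only proof supplies the missing ingredient.
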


\makeproof{lem:delta-signature-bound}{
Without loss of generality, assume the root gate of $\Phi$ is a $+$ gate, $\times$ gates have fan-in bounded by $2$, and the layers of $+$ and $\times$ gates is alternating.  We construct the required  $\delta$-cluster $D$  in a top down fashion as follows. 
\begin{enumerate}
\item Initialize $D = X_v$, where $v$ is the root gate in the formula.
\item For a $+$ gate $v = v_{11}\times v_{12} + \dots + v_{r1}\times v_{r2}$, let $b_1,\ldots b_r \in \{1,2\}$  be the indices guaranteed by Definition~\ref{def:variable-close}. Let the $c$ different sets in the $\delta$-equivalence class of $\{X_{v_{1b_1}}, \ldots, X_{v_{rb_r}}\}$ be $X_{i_1,b_{i_1}}, \ldots X_{i_c,b_{i_c}}$. For each partial signature $C'= C_1\subseteq C_{2} \subseteq \cdots \subseteq C_{\ell'}$ such that $C_1 = X_v$ :  Add to set $D$, the signatures $C'^{j} = X_{i_j,b_{i_j}} \subseteq C_1\subseteq C_2 \subseteq \dots \subseteq C_{\ell'}$ for $1\le j\le c$. Now, mark the $+$ gates $v_{i_1b_{i_1}},\ldots, v_{i_c, b_{i_c}}$
\item Repeat $2$ for every marked node until there are nor marked nodes left.  
\end{enumerate}
The set $D$ thus obtained is a $\delta$-cluster for some \scover~${\cal C}$ of $|\Phi|$. $|D|$ is at most $c^{O(\log N)}$, since at every iteration, at most $c$ new signatures might be included for each marked node.  \qed
}

%
%

Finally we conclude with the proof of Theorem~\ref{thm:lb-delta-close}:
\begin{proof}[of Theorem~\ref{thm:lb-delta-close}]
Let $\Psi$ be a $(c,\delta)$ variable close circuit of depth $O(\log N)$. Let $\Phi$ be the formula obtained by duplicating nodes in $\Psi$ as   necessary. By Lemma~\ref{lem:delta-signature-bound},  let $\{C_1,\ldots, C_t\}$ be a $\delta$-cluster of a \scover~ ${\cal C}$ of $\Phi$, for $t = N^{o(\log N)}$. Then, by Observations~\ref{obs:central-sign-prob} and~\ref{lem:delta-unbalanced}, the probability     that there is a signature in ${\cal C}$ that is not $k-2\delta$ unbalanced is at most $t\cdot N^{-\Omega(\log N)} <1$ for $\varphi \sim {\cal D}$. Therefore, there is  a  $\varphi$ such that every signature in $\{C_1,\ldots, C_t\}$ is $k-2\delta$ unbalanced. By Lemma \ref{lem:covering-sign}, there is a  $\varphi$  such that ${\sf rank}(M_{g^\varphi}) \le |\Phi|\cdot 2^{N/2 - (k-2\delta)} \le s^{O(\log N)} 2^{N/2 - k/5} < 2^N/2$ for $s< 2^{k/10\log N}$, a contradiction to Lemma \ref{lem:ry}. \qed
\end{proof}

\bibliographystyle{abbrv}
\bibliography{refbib}

\appendix
\appendixproofsection{Appendix}

\subsection{Oblivious Read-Once Algebraic Branching Programs}
\label{subsec:super-poly}
 In this section we demonstrate the usefulness of central signatures in the case of oblivious ROABPs.  This  exposition is only for demonstrative purpose, the lower bound obtained here is subsumed by Theorem~\ref{thm:lb-roabp}. 
 
Let $P$ be an oblivious ROABP and $\Phi$ the  multilinear formula for $P$ as in Lemma~\ref{lem:abptoformula}. For a gate $v$ in $\Phi$, let $I_v=[i_{v},j_{v}]$ denote the {\em interval associated with gate $v$} as in Section~\ref{sec:explb}.  Let $S_v=\{x_{\ell}\mid i_v\leq \ell \leq j_v\}$ be the set of variables. Note that $X_v \subseteq S_v$. A full central  signature in $\Phi$ is a sequence of sets $S_{v_1} \subseteq S_{v_2} \subseteq \cdots \subseteq S_{v_\ell}$, with $|S_{v_{i+1}}| \le 2|S_{v_i}|$ where $v_1, \ldots, v_{\ell}$ is a leaf to root path in $\Phi$.

\begin{obs}
\label{obs:full-sign}
Let $P$ be an oblivious ROABP computing $f$ and $\Phi$ be a multilinear formula obtained from $\Phi$ for $f$. Let $N$ be a power of $2$ and $C : S_{v_1} \subseteq S_{v_2} \subseteq \cdots \subseteq S_{v_\ell}$ 
be a full central signature  in $\Phi$.  For $i\in \{2,\ldots\ell\}$, we have, either  $|S_{v_i}| = 2|S_{v_{i-1}}|$ or $|S_{v_i}| = |S_{v_{i-1}}|$. Further, the number of full central signatures in $\Phi$ is $O(N)$.
\end{obs}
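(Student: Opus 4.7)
The plan is to exploit the balanced divide-and-conquer structure of $\Phi$ from Lemma~\ref{lem:abptoformula}, together with the fact that an oblivious ROABP with variable order $x_1, \ldots, x_N$ assigns to each gate $v$ in $\Phi$ an interval $I_v \subseteq [1,N]$ such that $S_v = \{x_\ell \mid \ell \in I_v\}$. Throughout I will use the alternation of $+$ and $\times$ layers and the fan-in $2$ bound on $\times$ gates, both guaranteed by Lemma~\ref{lem:abptoformula}.

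For the first claim, I would analyze the parent-child transitions $v_{i-1} \to v_i$ along the leaf-to-root path. If $v_i$ is a $+$ gate, then by construction all its children are $\times$ gates of the form $[s, u_j] \times [u_j, t]$ for various middle-vertex choices $u_j$, each associated with the same interval as $v_i$ itself; hence $I_{v_i} = I_{v_{i-1}}$ and $|S_{v_i}| = |S_{v_{i-1}}|$. If instead $v_i$ is a $\times$ gate, it has child $v_{i-1}$ and a unique sibling $w$, with $I_{v_i}$ equal to the disjoint union $I_{v_{i-1}} \cup I_w$. The splitting rule in Lemma~\ref{lem:abptoformula} keeps the two halves' sizes in $\{\lceil |I_{v_i}|/2 \rceil, \lfloor |I_{v_i}|/2 \rfloor\}$, and since $N$ is a power of $2$ this recursive halving is exact at every recursion level, giving $|I_{v_{i-1}}| = |I_w| = |I_{v_i}|/2$ and therefore $|S_{v_i}| = 2|S_{v_{i-1}}|$.

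For the counting claim, the key structural observation is that when $N$ is a power of $2$ the intervals arising in $\Phi$ are exactly the dyadic intervals of $[1,N]$: at each product-height $j \in \{0, 1, \ldots, \log N\}$, the intervals associated with $+$ gates at that height form the unique dyadic partition of $[1,N]$ into $2^j$ blocks of size $N/2^j$. Given the starting singleton $\{x_r\}$ of a full central signature, the entire chain of strictly increasing intervals is then forced, since each dyadic interval has a unique dyadic parent. Because the balanced recursion places every leaf at the same depth, the alternation between doubling steps and stay-the-same steps within the signature is also fixed by the $+/\times$ layering, so for each $r \in [N]$ at most one full central signature begins at $\{x_r\}$, yielding a total count of at most $N = O(N)$. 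The main obstacle is being precise about the identification of the gate-intervals with the dyadic tree on $[1,N]$; once balanced bisection is shown to be exact for $N$ a power of $2$, the rest is bookkeeping.
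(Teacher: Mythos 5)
Your proposal is correct and takes essentially the same approach as the paper's proof: both note that a step through a $+$ gate leaves the interval (and hence $S_v$) unchanged, while a step through a $\times$ gate doubles it because the balanced bisection is exact when $N$ is a power of $2$, and both count the distinct signature sequences via the resulting dyadic tree structure on intervals. Your counting is in fact slightly more careful than the paper's (the paper writes it as "$2^{O(\log N)} = O(N)$," whereas you pin it down to one signature per starting singleton), but the underlying idea is the same.
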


\makeproof{obs:full-sign}{
Let $P$ be an oblivious ROABP $\Phi$ be a multilinear formula obtained from $\Phi$ for polynomial $f$. Let $N$ be a power of $2$ and $C : S_{v_1} \subseteq S_{v_2} \subseteq \cdots \subseteq S_{v_\ell}$ 
be a full central signature  in $\Phi$. For every $2 \leq i\leq \ell$,
\begin{enumerate}
\item $v_i$ is a $+$ gate :  $v_{i-1}$ is a child of $v_i$. Then, $I_{v_{i-1}}=I_{v_i}$. Since  $S_{v_i}$ is the set of variables corresponding to the interval $I_{v_i}$ and $I_{v_{i-1}}=I_{v_i}$, we have $S_{v_{i-1}}=S_{v_i}$.
\item $v_i$ is a $\times$ gate : $v_{i-1}$ is a child of $v_i$. Let $w$ be the other child of $v_i$. Then, $I_{v_i}=I_{v_{i-1}}\cup I_w$. Since $N$ is a power of $2$, from the construction of $\Phi$ in Lemma \ref{lem:abptoformula}, we have $|I_{v_{i-1}}| = |I_w|$. Hence, $	S_{v_i} = S_{v_{i-1}} \cup  S_w$ implying that $|S_{v_i}| = 2|S_{v_{i-1}}|$. 
\end{enumerate}
For any child $u$ of a $+$ gate $v$, we have $S_u=S_v$. Therefore, we only consider full central signatures where $v_1,v_2,\ldots,v_{\ell}$ are product gates. From construction of $\Phi$ in Lemma \ref{lem:abptoformula}, depth of $\Phi$ is $O(\log N)$ and every $\times$ gate has fan-in 2. Hence, $\ell=O(\log N)$ and number of full central signatures is $2^{O(\log N)}=O(N)$. \qed
}

Let $\varphi :X \to Y \cup Z$ be a partition. We say a gate $v$ in $\Phi$ is $\kw$ with respect to $\varphi$ if every full central signature in $\Phi$ that terminates at $v$ is $\kub$ with reaspect to $\Phi$.

\begin{lemma}
\label{lem:rank-full-sign}
Let $P$ be an oblivious ROABP $\Phi$ be a multilinear formula obtained from $\Phi$ for polynomial $f$. Let $N$ be a power of $2$. If $\varphi :X \to Y \cup Z$ is such that root gate of $\Phi$ is $\kw$ with respect to $\varphi$, then $\rank(M_{f^\varphi}) \le |\Phi|\cdot 2^{N/2 - k/2}$.    
\end{lemma}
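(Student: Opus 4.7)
The plan is to strengthen the statement and prove it by structural induction: for every gate $v$ in $\Phi$ that is $k$-weak with respect to $\varphi$,
\[
\rank(M_{f_v^\varphi}) \le |\Phi_v| \cdot 2^{|X_v|/2 - k/2},
\]
and then apply this to the root. Recall that in the formula produced from an oblivious ROABP by Lemma~\ref{lem:abptoformula}, every variable indexed in the interval $I_v$ actually appears in the subformula under $v$, so $X_v = S_v$, and thus the full central signature condition translates directly into a condition on the variable sets of subformulas.

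The starting case is when $X_v$ is itself $k$-unbalanced under $\varphi$: Lemma~\ref{lem:rankub} immediately yields $\rank(M_{f_v^\varphi}) \le 2^{\min\{|Y_v|,|Z_v|\}} \le 2^{(|X_v|-k)/2}$, which is within the target bound. This also absorbs the leaf case, because a $k$-weak leaf must have its single-element signature $S_v$ already $k$-unbalanced. In the remaining cases, assume $X_v$ is not $k$-unbalanced.

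If $v$ is a $+$ gate with children $v_1,\ldots,v_r$, then by Observation~\ref{obs:full-sign} we have $S_{v_i} = S_v$ for every $i$, so each full central signature ending at $v$ is obtained from one ending at some $v_i$ by appending the same set $S_v$. Therefore each $v_i$ is $k$-weak, and the induction hypothesis together with sub-additivity (Lemma~\ref{lem:sub-aditivity}) yields
\[
\rank(M_{f_v^\varphi}) \le \sum_{i=1}^{r} |\Phi_{v_i}| \cdot 2^{|X_v|/2 - k/2} \le |\Phi_v| \cdot 2^{|X_v|/2 - k/2}.
\]
If $v$ is a $\times$ gate with children $v_1, v_2$, then Observation~\ref{obs:full-sign} forces $|S_v| = 2|S_{v_i}|$ and $S_v = S_{v_1} \cup S_{v_2}$ (disjoint by syntactic multilinearity). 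Every full central signature ending at $v$ extends exactly one ending at some $v_i$; since $X_v$ is not $k$-unbalanced, such an extension is $k$-unbalanced only when the child-signature is already $k$-unbalanced, which forces both $v_1$ and $v_2$ to be $k$-weak. Applying the induction hypothesis to $v_1$, the trivial bound $\rank(M_{f_{v_2}^\varphi}) \le 2^{|X_{v_2}|/2}$ from Lemma~\ref{lem:rankub}, and sub-multiplicativity, we get
\[
\rank(M_{f_v^\varphi}) \le |\Phi_{v_1}| \cdot 2^{|X_{v_1}|/2 - k/2} \cdot 2^{|X_{v_2}|/2} \le |\Phi_v| \cdot 2^{|X_v|/2 - k/2}.
\]

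The main subtle point is verifying that $k$-weakness propagates to the right children: at $+$ gates this is automatic because the last set of the signature does not change, while at $\times$ gates it uses the hypothesis that $X_v$ is not $k$-unbalanced to push the responsibility back to \emph{both} children. Observation~\ref{obs:full-sign} is essential here, since it pins down the exact size relation between $S_v$ and $S_{v_i}$ (doubling at $\times$, constant at $+$), so that full central signatures split cleanly across the two children of a product gate. Once this propagation is established the inductive bookkeeping on $|\Phi_v|$ is routine.
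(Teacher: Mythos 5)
Your proof follows the same inductive skeleton as the paper's: a base case when the gate's interval is $k$-unbalanced (via Lemma~\ref{lem:rankub}), propagation of $k$-weakness to children (trivially at $+$ gates since the interval is unchanged; via the doubling property and ``$S_v$ not $k$-unbalanced'' at $\times$ gates), and then sub-additivity and sub-multiplicativity from Lemma~\ref{lem:sub-aditivity}. Your observation that \emph{both} children of a $\times$ gate inherit $k$-weakness is correct and slightly strengthens the paper's ``at least one,'' but the subsequent induction only uses one of them, just as in the paper.

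The one genuine flaw is the assertion that $X_v = S_v$ for the formula obtained from an oblivious ROABP. The paper explicitly notes only $X_v \subseteq S_v$, and equality can fail: in an oblivious ABP a layer may have a mix of variable-labelled and constant-labelled edges, so a subprogram $[u,w]$ (and hence the corresponding subformula) need not mention every variable whose layer lies between $u$ and $w$. Your strengthened inductive statement $\rank(M_{f_v^\varphi}) \le |\Phi_v|\cdot 2^{|X_v|/2-k/2}$ is then not correct: if $S_v$ is $k$-unbalanced but $X_v$ is not (and much smaller), the rank can exceed $2^{|X_v|/2-k/2}$, and moreover the case split ``$X_v$ is $k$-unbalanced'' no longer lines up with the $k$-weakness hypothesis, which is a condition on the $S$-sets along central paths. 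The repair is exactly what the paper does: phrase the induction in terms of $|S_v|$. The base case then follows because $\min\{|\varphi(X_v)\cap Y|,\,|\varphi(X_v)\cap Z|\} \le \min\{|\varphi(S_v)\cap Y|,\,|\varphi(S_v)\cap Z|\} \le (|S_v|-k)/2$ whenever $S_v$ is $k$-unbalanced, and the combining steps go through since $|S_v| = |S_{v_1}|+|S_{v_2}|$ at product gates and $S_{v_i}=S_v$ at sum gates. With this substitution your argument coincides with the paper's.
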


\makeproof{lem:rank-full-sign}{
The proof is by induction on the structure of the formula $\Phi$. Let $v$ be the root gate of $\Phi$. Assume that $|S_v| >2k$.
 \noindent \textbf{Case 1 :} $v$ is $\kub$.  Then, ${\sf rank}(M_{f^\varphi}) \le  2^{N/2 - k} \le  |\Phi|\cdot 2^{N/2 - k/2}$. 

\noindent \textbf{Case 2 :} $v$ is a sum gate. Let $v_1,v_2,\ldots,v_r$ be the children of $v$ in $\Phi$, $r\leq w$. Assume w.l.o.g that $v$ is not $\kub$, else apply Case 1. Since $v$ is $\kw$ and gate $v$ is not $\kub$, for every $i\in [r]$ either $v_i$ is  $\kw$ or $k<|S_{v_i}| < |S_{v_{i+1}}|/2$.  In any case, $\rank(M_{f_{v_i}}) \leq |\Phi_{v_i}|\cdot 2^{|S_{v_i}|/2-k/2}$. By sub-additivity, 
\begin{align*}
\rank(M_{f_v}) &\leq \sum_{i=1}^r\rank(M_{f_{v_i}}) \leq |\Phi_{v_1}|2^{|S_{v_1}|/2-k/2} + \cdots + |\Phi_{v_r}|2^{|S_{v_r}|/2-k/2} \\
& \le |\Phi|\cdot 2^{|S_v|/2 - k/2} \le |\Phi|\cdot 2^{N/2 - k/2} \text{~as $|S_v|=N$} 
\end{align*}
\noindent\textbf{Case 3 :} $v$ is a product gate with children $v_1$ and $v_2$. Assume w.l.o.g that $v$ is not $\kub$, else apply Case 1. Since $v$ is $\kw$ and gate $v$ is not $\kub$, atleast one of $v_1$ or $v_2$ is $\kw.$. W.l.o.g, let $v_1$ be $\kw$. By induction, $\rank(M_{f_{v_1}^\varphi}) \leq |\Phi_{v_1}|\cdot 2^{|S_{v_1}|/2-k/2}$ and $\rank(M_{f_{v_2}^\varphi}) \le 2^{|S_{v_2}|/2}$. By sub-multliplicativity, $\rank(M_{f^\varphi})\leq |\Phi_{v_1}|\cdot 2^{|S_{v_1}|/2+|S_{v_2}|/2-k/2} \leq |\Phi|\cdot 2^{|S_v|/2 - k/2} \le |\Phi|\cdot 2^{N/2 - k/2}$ as $S_v = S_{v_1} \cup S_{v_2}$ and $|S_v|=N$. \qed
}

Combining Observation~\ref{obs:full-sign} and Lemma~\ref{lem:rank-full-sign}, we get
\begin{corollary}
Let $f_1,\ldots, f_m$ be  oblivious ROABPs such that $g = f_1+\ldots +f_m$, where $g=g_{1,N}$. Then, $m =  	N^{\Omega(1)}$.
\end{corollary}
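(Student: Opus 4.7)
The proof will follow the standard rank-based argument template of the paper, with Observation~\ref{obs:full-sign} providing the crucial structural input: the multilinear formula $\Phi_i$ obtained from an oblivious ROABP has only $O(N)$ full central signatures, as opposed to the $N^{\Omega(\log N)}$ signatures an arbitrary $O(\log N)$-depth multilinear formula may possess. This sparsity opens the door to a simple union bound across all summands, and is exactly what is missing in the general multilinear formula setting.

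Assume for contradiction that $g = f_1 + \cdots + f_m$, where each $f_i$ is computed by a (polynomial-size) oblivious ROABP $P_i$. First, apply Lemma~\ref{lem:abptoformula} to convert each $P_i$ into a syntactic multilinear formula $\Phi_i$ of size $N^{O(\log N)}$ and depth $O(\log N)$. Fix a parameter $k$ to be chosen later, and draw $\varphi \sim {\cal D}$. For each $\Phi_i$ and each full central signature $C : X_{v_1} \subseteq \cdots \subseteq X_{v_\ell}$ in $\Phi_i$ whose starting level satisfies $k < |X_{v_1}| \le 2k$, Observation~\ref{obs:central-sign-prob}(2) gives that the probability that $C$ is not $\kub$ is at most $N^{-\Omega(\log N)}$. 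By Observation~\ref{obs:full-sign}, each $\Phi_i$ has at most $O(N)$ such signatures, so a union bound over all $m \cdot O(N)$ full central signatures across $\Phi_1, \ldots, \Phi_m$ shows that provided $m \cdot N \cdot N^{-\Omega(\log N)} < 1$, there is a partition $\varphi$ under which every such signature is $k$-unbalanced. In other words, every $\Phi_i$ is $\kw$ with respect to $\varphi$.

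For such $\varphi$, Lemma~\ref{lem:rank-full-sign} yields $\rank(M_{f_i^{\varphi}}) \le |\Phi_i| \cdot 2^{N/2 - k/2}$ for each $i$. Sub-additivity (Lemma~\ref{lem:sub-aditivity}) then gives
\[
\rank(M_{g^\varphi}) \le m \cdot \max_i |\Phi_i| \cdot 2^{N/2 - k/2} \le m \cdot N^{O(\log N)} \cdot 2^{N/2 - k/2}.
\]
Combining with Lemma~\ref{lem:ry}, which asserts $\rank(M_{g^\varphi}) = 2^{N/2}$, forces $m \ge 2^{k/2} / N^{O(\log N)}$. Choosing $k$ to be a sufficiently large constant times $\log^2 N$ makes the right-hand side $N^{\Omega(1)}$ (in fact, $N^{\Omega(\log N)}$), yielding the claim.

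There is no real obstacle beyond choosing the parameter $k$, since the main structural work is already encapsulated in Observations~\ref{obs:full-sign} and~\ref{obs:central-sign-prob}. The only delicate issue is balancing $k$: it must simultaneously be small enough that the union bound $m \cdot N \cdot N^{-\Omega(\log N)}$ stays below $1$, and large enough that $2^{k/2}$ dominates the $N^{O(\log N)}$ blow-up incurred when converting the ROABP into a formula. The choice $k = \Theta(\log^2 N)$ comfortably satisfies both constraints.
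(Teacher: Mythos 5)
Your proof is correct and follows exactly the route the paper intends (the paper merely writes ``Combining Observation~\ref{obs:full-sign} and Lemma~\ref{lem:rank-full-sign}, we get'' without spelling out the argument). One small point of care: Observation~\ref{obs:central-sign-prob}(2) is stated for a central signature whose \emph{first} set has size in $(k,2k]$, whereas a full central signature by definition begins at a leaf, where $|S_{v_1}|=1$. You should apply the observation to the suffix $S_{v_j}\subseteq\cdots\subseteq S_{v_\ell}$ where $j$ is the least index with $|S_{v_j}|>k$ (the doubling condition forces $|S_{v_j}|\le 2k$), and note that $k$-unbalancedness of this suffix implies $k$-unbalancedness of the whole signature; this does not change the $O(N)$ count of signatures. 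Also implicit in the argument is that the ROABPs have size $N^{O(1)}$ so that $|\Phi_i| = N^{O(\log N)}$; with that in place, your parameter choice $k=\Theta(\log^2 N)$ (with the constant scaled against the exponent of $|\Phi_i|$) indeed yields $m = N^{\Omega(\log N)}$, which is stronger than the stated $N^{\Omega(1)}$.
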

\begin{remark}
The above result is only to demonstrate the usefullness of  full central signatures over central paths or central signatures. However, the lower bound above is far inferior to the one in Theorem~\ref{thm:lb-roabp}. 
\end{remark}

\subsection*{Proofs from Section \ref{subsec:sumroabp}}


\appendixproof{lem:abptoformula}
\appendixproof{lem:rank}
\appendixproof{obs:hitting}
\appendixproof{lem:kbl}

\subsection*{Proofs from Section \ref{subsec:rpass}}

\appendixproof{lem:rpasstoformula}
\appendixproof{lem:rpass}
\appendixproof{thm:lb-rpass}

\subsection*{Proofs from Section \ref{subsec:sparse}}

\appendixproof{lem:sparsetoformula}
\appendixproof{lem:sparse-ub}
\appendixproof{thm:sparse-factor}

\subsection*{Proofs from Section \ref{subsec:super-poly}}

\appendixproof{obs:full-sign}
\appendixproof{lem:rank-full-sign}

\subsection*{Proofs from Section \ref{sec:signature}}

\appendixproof{lem:covering-sign}
\appendixproof{lem:delta-signature-bound}

\end{document}